\numberwithin{equation}{section}
\numberwithin{equation}{section}
\newtheorem{Thm}{Theorem}
\def \wh#1{\widehat{#1}}
\def \wt#1{\widetilde{#1}}
\newcommand{\nn}{\nonumber}
\newcommand{\bC}{\boldsymbol{C}}
\newcommand{\bD}{\boldsymbol{D}}
\newcommand{\bI}{\boldsymbol{I}}
\newcommand{\bK}{\boldsymbol{K}}
\newcommand{\bL}{\boldsymbol{L}}
\newcommand{\bM}{\boldsymbol{M}}
\newcommand{\br}{\boldsymbol{r}}
\newcommand{\bs}{\boldsymbol{s}}
\newcommand{\tc}{\,^t \hskip -2pt {\boldsymbol{c}}}
\newcommand{\bT}{\boldsymbol{T}}
\newcommand{\bF}{\boldsymbol{F}}
\newcommand{\bG}{\boldsymbol{G}}
\newcommand{\bH}{\boldsymbol{H}}
\newcommand{\Ga}{\boldsymbol{\Gamma}}
\newcommand{\Og}{\boldsymbol{\Omega}}
\newcommand{\st}{\hbox{\tiny\it{T}}}
\def \cd#1{\accentset{\bullet}{#1}}
\def \dc#1{\accentset{\circ}{#1}}
\DeclareMathOperator{\sech}{sech}
\DeclareMathOperator{\csch}{csch}
\begin{document}

\title{Local and nonlocal complex discrete and semi-discrete sine-Gordon equations and solutions}

\author{Xiao-bo Xiang, Wei Feng, Song-lin Zhao$^*$\\
\\\lowercase{\scshape{
Department of Applied Mathematics, Zhejiang University of Technology,
Hangzhou 310023, P.R. China}}}
\email{*Corresponding Author: songlinzhao@zjut.edu.cn}
\begin{abstract}

In this paper, local and nonlocal complex reduction of a discrete and a semi-discrete negative order Ablowitz-Kaup-Newell-
Segur equations is studied. Cauchy matrix type solutions,
including soliton solutions and Jordan-block solutions, for the resulting local and nonlocal complex discrete and semi-discrete sine-Gordon equations
are constructed. Dynamics of 1-soliton solution are analyzed and illustrated.

\end{abstract}

\keywords{local and nonlocal complex discrete and semi-discrete sine-Gordon equations, Cauchy matrix solutions, dynamics.}

\maketitle

\section{Introduction}
\label{sec-1}

The nonlocal integrable systems have been quite widely studied in recent years.
The first example of nonlocal integrable systems is the reverse-space
nonlinear Schr\"{o}dinger equation
\begin{align}
\label{n-NLS}
iw_t(x,t)+w_{xx}(x,t)\pm w^2(x,t)w^*(-x,t)=0,
\end{align}
proposed by Ablowitz and Musslimani in \cite{AM-nNLS}, where asterisk means the complex conjugate.
This equation is parity-time symmetric because the potential $V(x,t)=w(x,t)w^*(-x,t)=V^*(-x,t)$. Since
then, the nonlocal integrable systems have attracted much interest in view of their wide range of
applications in various fields of mathematics and physics.
In mathematics, this type of equation possesses Lax integrability and admits infinite number of
conservation laws \cite{AM-nNLS-IST}. Physically speaking, this type of equation can be usually
applied to describing the two-place (Alice-Bob) physics or multi-place physics \cite{Lou-JMP,Lou-CTP}.
Besides the model \eqref{n-NLS}, many other nonlocal integrable systems have been proposed,
including nonlocal Korteweg-de Vries model, nonlocal modified Korteweg-de Vries model,
nonlocal sine-Gordon (sG) model, nonlocal Davey-Stewartson model, (e.g. \cite{Lou,AbMu-SAPM}).
Up to now, many traditional methods have been used to search for exact solutions to the nonlocal integrable systems
\cite{LiX-PRE-2015,HL,Yan-AML-2015,Zhou-ZX-1,XU-AML-2016,SXZ,Ab-IST,Yang-Chen,CZ,CDLZ,FZS-IJMPB,Zhang-SAPM}, such as the inverse scattering transformation,
the Riemann-Hilbert approach, the Hirota's bilinear method, the Darboux transformation, etc.

Although great progress has been got in the nonlocal systems within the field of the continuous or semi-discrete
integrable systems. There was a little work on the nonlocal discrete integrable systems.
In \cite{ZKZ}, Zhang {\it et al.} investigated two types of nonlocal discrete integrable equations, which were
constructed as reductions of 2-component Adler-Bobenko-Suris systems \cite{Bri}. In particular, they showed that
the $2[0,1]$ Adler-Bobenko-Suris system allows reverse-$n$ nonlocal reduction and
the $2[1,1]$ Adler-Bobenko-Suris system admits reverse-$(n,m)$ nonlocal reduction.
Moreover, they used the nonlocal H1 equations as examples to demonstrate their idea. Up to now, there was no
open literature on the solutions to the nonlocal discrete integrable systems to the best of the author's knowledge.

In this paper, a discrete Ablowitz-Kaup-Newell-Segur (AKNS) type equation that we want to investigate is of form
\begin{subequations}
\label{dnAKNS}
\begin{align}
& \label{dnAKNS-a} q(\wh{u}-u)[p^2-(u+\wt{u})(v+\wt{v})]^{\frac{1}{2}}+(u+\wt{u})[1-q^2(\wh{u}-u)(\wh{v}-v)]^{\frac{1}{2}}\nn \\
& \quad +(\wh{u}+\wh{\wt{u}})[1-q^{2}(\wh{\wt{u}}-\wt{u})(\wh{\wt{v}}-\wt{v})]^{\frac{1}{2}}+q(\wt{u}-\wh{\wt{u}})
[p^2-(\wh{u}+\wh{\wt{u}})(\wh{v}+\wh{\wt{v}})]^{\frac{1}{2}}=0, \\
& \label{dnAKNS-b} q(\wh{v}-v)[p^2-(u+\wt{u})(v+\wt{v})]^{\frac{1}{2}}+(v+\wt{v})[1-q^2(\wh{u}-u)(\wh{v}-v)]^{\frac{1}{2}}\nn \\
& \quad +(\wh{v}+\wh{\wt{v}})[1-q^2(\wh{\wt{u}}-\wt{u})(\wh{\wt{v}}-\wt{v})]^{\frac{1}{2}}+q(\wt{v}-\wh{\wt{v}})
[p^{2}-(\wh{u}+\wh{\wt{u}})(\wh{v}+\wh{\wt{v}})]^{\frac{1}{2}}=0,
\end{align}
\end{subequations}
which was introduced in \cite{Zhao-2016} by using the generalized Cauchy matrix approach \cite{ZZ-SAPM}. This is one of
the discrete versions of the first negative order AKNS equation \cite{Zhang-PD}. In the rest part of the paper,
we denote system \eqref{dnAKNS} by dAKNS(-1) for short.
The notation adopted in \eqref{dnAKNS} is as follows: both of dependent variables $u$ and $v$ are functions
defined on the two-dimensional lattice with discrete coordinates $(n,m)\in \mathbb{Z}^2$, e.g., $u=u(n, m)=:u_{n, m}$ and
the operations $u\mapsto \wt{u}$, and $u\mapsto \wh{u}$ denote elementary shifts in the two directions of the
lattice, i.e., $\wt{u}=u_{n+1,m},~\wh{u}=u_{n,m+1}$, while for the combined shift we have: $\wh{\wt{u}}=u_{n+1,m+1}$.
$p$ and $q$ are continuous lattice parameters associated with
the grid size in the directions of the lattice given by the independent variables $n$ and $m$.
Reinterpreting the variables $u$ and $v$ as $u_{n,m}=:\mathbb{U}_{n}(t)$ and $v_{n,m}=:\mathbb{V}_{n}(t)$ with $t=-\frac{m}{q}\thicksim O(1)$, then
under the continuum limit $m\rightarrow \infty,~~q\rightarrow \infty$, a semi-discrete version of
equation \eqref{dnAKNS} was also revealed, which reads
\begin{subequations}
\label{sdnAKNS}
\begin{align}
& \label{sdnAKNS-a} (\mathbb{U}'-\wt{\mathbb{U}}')(p^{2}-(\mathbb{U}+\wt{\mathbb{U}})(\mathbb{V}+\wt{\mathbb{V}}))^{\frac{1}{2}}
+(\mathbb{U}+\wt{\mathbb{U}})((1-\mathbb{U}'\mathbb{V}')^{\frac{1}{2}}+(1-\wt{\mathbb{U}}'\wt{\mathbb{V}}')^{\frac{1}{2}})=0, \\
& \label{sdnAKNS-b} (\mathbb{V}'-\wt{\mathbb{V}}')(p^{2}-(\mathbb{U}+\wt{\mathbb{U}})(\mathbb{V}+\wt{\mathbb{V}}))^{\frac{1}{2}}
+(\mathbb{V}+\wt{\mathbb{V}})((1-\mathbb{U}'\mathbb{V}')^{\frac{1}{2}}+(1-\wt{\mathbb{U}}'\wt{\mathbb{V}}')^{\frac{1}{2}})=0,
\end{align}
\end{subequations}
where the prime means the derivative of $t$. We name the equation \eqref{sdnAKNS} as sdAKNS(-1).
Following the reduction technique developed in recent papers \cite{FZ-ROMP}, we would like to consider local
and nonlocal complex reduction of the systems \eqref{dnAKNS} and \eqref{sdnAKNS}.
We call the resulting local and nonlocal complex discrete sG equation as cnd-sG, as well as
the resulting local and nonlocal complex semi-discrete sG equation as cnsd-sG.
We will construct Cauchy matrix solutions,
including soliton solutions and Jordan-block solutions, for these sG type equations.

The paper is organized as follows. In Sec. 2, we briefly recall Cauchy matrix type solutions for the dAKNS(-1) equation \eqref{dnAKNS}.
Local and nonlocal complex reduction for the dAKNS(-1) equation \eqref{dnAKNS} is investigated.
We construct some exact solutions for the cnd-sG equation. 1-soliton solution, 2-soliton solutions and
the simplest Jordan-block solution are listed. The dynamic behaviors of 1-soliton solution are analyzed and illustrated.
In Sec. 3, we consider local and nonlocal complex reduction for the sdAKNS(-1) equation \eqref{dnAKNS} and discuss the exact solutions to the resulting
local and nonlocal complex semi-discrete sG equation. Section 4 is for the conclusions.

\section{Local and nonlocal complex reduction for the dAKNS(-1) equation \eqref{dnAKNS}}\label{sec-2}

In this section, we investigate the local and nonlocal complex reduction for the dAKNS(-1) equation \eqref{dnAKNS}.
Moreover, we discuss two types of Cauchy matrix solutions for the resulting cnd-sG equation.
For the sake of simplicity, in what follows we omit the index of each unit matrix $\bI$ to indicate its size.

We start by recalling Cauchy matrix type solutions for the dAKNS(-1) equation \eqref{dnAKNS}, which were expressed as
\begin{align}
\label{Sij}
u=\tc_{2}(\bI-\bM_2\bM_1)^{-1}\br_2,\quad v=\tc_{1}(\bI-\bM_1 \bM_2)^{-1}\br_1,
\end{align}
where $\bM_1\in \mathbb{C}_{N_1\times N_2}$, $\bM_2\in \mathbb{C}_{N_2\times N_1}$,
$\br_j \in \mathbb{C}_{N_j\times 1}$, $\tc_j \in \mathbb{C}_{1\times N_j}$ with $N_1+N_2=2N$ satisfy the following determining equation set (DES)
\begin{subequations}
\label{DES-d}
\begin{align}
& \label{DES-d-M12}
\bK_1 \bM_1-\bM_1\bK_2=\br_1\, \tc_2, \quad \bK_2 \bM_2-\bM_2\bK_1=\br_2\, \tc_1, \\
& \label{DES-d-K1}
(p\bI-\bK_1)\wt{\br}_1=(p\bI+\bK_1)\br_1,\quad (p\bI+\bK_2)\wt{\br}_2=(p\bI-\bK_2)\br_2, \\
& \label{DES-d-K2}
(q\bI-\bK_1^{-1})\wh{\br}_1=(q\bI+\bK_1^{-1})\br_1,\quad (q\bI+\bK_2^{-1})\wh{\br}_2=(q\bI-\bK_2^{-1})\br_2.
\end{align}
\end{subequations}
In DES \eqref{DES-d}, $\{\br_j\}$ and $\{\bM_j\}$ are functions of $(n,m)$ while $\{\tc_j\}$ and $\{\bK_j\}$ are non-trivial constant matrices;
$\bK^{-1}_1$ and $\bK^{-1}_2$ are the inversion of matrices $\bK_1$ and $\bK_2$, respectively.
We assume that matrices $\{p\bI+(-1)^{j}\bK_j\}$ and $\{q\bI+(-1)^{j}\bK_{j}^{-1}\}$ are inverse.

Functions $u$ and $v$ are invariant and DES \eqref{DES-d} is covariant under similarity transformations
\begin{align}
\bK_j=\Ga_j^{-1}\bar{\bK}_j\Ga_j,\quad \br_j=\Ga_j^{-1}\bar{\br}_j,\quad \bs_j=\Ga^{\st}_j\bar{\bs}_j,\quad \bM_1=\Ga_1^{-1}\bar{\bM}_1\Ga_2,\quad
\bM_2=\Ga_2^{-1}\bar{\bM}_2\Ga_1,
\end{align}
where $\{\Ga_j\}$ are transform matrices. Thus solutions to the dAKNS(-1) equation \eqref{dnAKNS}
can be still given by \eqref{Sij}, where the entities satisfy canonical DES
\begin{subequations}
\label{DES-C}
\begin{align}
& \label{DES-M12-C}\Og_1 \bM_1-\bM_1\Og_2=\br_1\, \tc_2, \quad \Og_2 \bM_2-\bM_2\Og_1=\br_2\, \tc_1, \\
& \label{DES-p-C}(p\bI-\Og_1)\wt{\br}_1=(p\bI+\Og_1)\br_1,\quad (p\bI+\Og_2)\wt{\br}_2=(p\bI-\Og_2)\br_2, \\
& \label{DES-q-C}(q\bI-\Og_1^{-1})\wh{\br}_1=(q\bI+\Og_1^{-1})\br_1,\quad (q\bI+\Og_2^{-1})\wh{\br}_2=(q\bI-\Og_2^{-1})\br_2,
\end{align}
\end{subequations}
where $\Og_1$ and $\Og_2$ are the canonical forms of the matrices $\bK_1$ and $\bK_2$, respectively.

Equations \eqref{DES-C} are linear, where equations \eqref{DES-p-C} and \eqref{DES-q-C} are used to
determine plane-wave factor vector $\br$ and the equations \eqref{DES-M12-C} are used to give matrices
$\bM_1$ and $\bM_2$. The two equations in \eqref{DES-M12-C} are nothing but the famous Sylvester equations,
which have a unique solution $\{\bM_1,~\bM_2\}$ if and only if
$\mathcal{E}(\Og_1)\bigcap \mathcal{E}(\Og_2)=\varnothing$, where $\mathcal{E}(\Og_1)$ and $\mathcal{E}(\Og_2)$
represent the eigenvalue sets of $\Og_1$ and $\Og_2$, respectively. From \eqref{DES-p-C} and \eqref{DES-q-C}, we know
\begin{align}
\label{rj-solu}
\br_{j}=(p\bI+(-1)^{j-1}\Og_j)^{n}(p\bI+(-1)^{j}\Og_j)^{-n}(q\Og_j+(-1)^{j-1}\bI)^{m}
(q\Og_j+(-1)^{j}\bI)^{-m}\bC_{j},
\end{align}
where constant column vectors $\{\bC_j\}$ are phase terms of $\{\br_j\}$. The key point of
the solving procedure of \eqref{DES-M12-C} is to factorize $\bM_1$ and $\bM_2$ into triplets, i.e. $\bM_1=\bF_1\bG_1\bH_2$ and
$\bM_2=\bF_2\bG_2\bH_1$, where $\{\bF_j,~\bH_j\} \subset \mathbb{C}_{N_j\times N_j}$, $\bG_1
\in \mathbb{C}_{N_1\times N_2}$ and $\bG_2\in \mathbb{C}_{N_2\times N_1}$.
When $\{\Og_j\}$ being diagonal matrices,
one can get the soliton solutions. When $\{\Og_j\}$ being Jordan-block matrices,
multiple-pole solutions can be derived. For the detailed calculations, one can refer to \cite{Zhao-2016}.

The integrable symmetry reduction for dAKNS(-1) \eqref{dnAKNS} is of $(\sigma n,\sigma m)$ type:
\begin{align}
\label{re-dsG}
v=\delta u^*_{\sigma}, \quad \delta,~~\sigma=\pm 1,
\end{align}
where for the function $f=:f(x_1,x_2)$ we have used notation $f_{\sigma}=: f(\sigma x_1,\sigma x_2)$, which makes equations \eqref{dnAKNS-a}
and \eqref{dnAKNS-b} self-consistent leading to a single integrable equation
\begin{align}
& \label{cnd-sG}
q(\wh{u}-u)[p^2-\delta(u+\wt{u})(u^*_{\sigma}+\wt{u}^*_{\sigma})]^{\frac{1}{2}}+(u+\wt{u})[1-\delta q^2(\wh{u}-u)(\wh{u}^*_{\sigma}-u^*_{\sigma})]^{\frac{1}{2}}\nn \\
& \quad +(\wh{u}+\wh{\wt{u}})[1-\delta q^{2}(\wh{\wt{u}}-\wt{u})(\wh{\wt{u}}^*_{\sigma}-\wt{u}^*_{\sigma})]^{\frac{1}{2}}+q(\wt{u}-\wh{\wt{u}})
[p^2-\delta (\wh{u}+\wh{\wt{u}})(\wh{u}^*_{\sigma}+\wh{\wt{u}}^*_{\sigma})]^{\frac{1}{2}}=0.
\end{align}
When $\sigma=1$, equation \eqref{cnd-sG} is nothing but exactly the complex discrete sG equation.
When $\sigma=-1$, equation \eqref{cnd-sG} is referred to as a complex reverse-$(n,m)$ discrete sG equation.
We observe that equation \eqref{cnd-sG} is preserved under transformation
$u\rightarrow -u$. Besides, equation \eqref{cnd-sG} with $(\sigma,\delta)=(\pm 1,1)$ and with $(\sigma,\delta)=(\pm 1,-1)$ can be
transformed into each other by taking $u\rightarrow iu$.

To derive solutions of the cnd-sG equation \eqref{cnd-sG}, we take $N_1=N_2=N$.
For its solution, we have the following result.
\begin{Thm}
\label{so-cnd-sG-Thm}
The function
\begin{align}
\label{cnd-sG-so}
u=\tc_{2}(\bI-\bM_2\bM_1)^{-1}\br_2
\end{align}
solves the cnd-sG equation \eqref{cnd-sG}, provided that
the entities satisfy canonical DES \eqref{DES-C} and simultaneously obey the constraints
\begin{align}\label{cnd-sG-M1M12}
\br_1=\varepsilon \bT \br^*_{2,\sigma}, \quad
\tc_{1}=\varepsilon \tc^{*}_{2}\bT^{-1}, \quad
\bM_1=-\delta\sigma\bT \bM^*_{2,\sigma}\bT^*,
\end{align}
in which $\bT\in \mathbb{C}_{N\times N}$ is a constant matrix satisfying
\begin{align}\label{cnd-sG-at-eq}
\Og_1\bT+\sigma\bT\Og^*_2=0,\quad \bC_1=\varepsilon\bT\bC_2^*,\quad  \varepsilon^2=\varepsilon^{*^2}=\delta.
\end{align}
\end{Thm}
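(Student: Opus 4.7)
The plan is to leverage the result already recalled from \cite{Zhao-2016}: the pair $(u,v)$ in \eqref{Sij} solves the dAKNS(-1) system \eqref{dnAKNS} whenever its entities satisfy the canonical DES \eqref{DES-C}. Since \eqref{cnd-sG} is obtained from \eqref{dnAKNS} precisely via the reduction $v=\delta u^*_\sigma$, the theorem follows once two items are verified: (i) the constraints \eqref{cnd-sG-M1M12}--\eqref{cnd-sG-at-eq} are compatible with \eqref{DES-C} in the sense that, imposed as initial data, they are preserved under the shifts $\wt{\cdot}$ and $\wh{\cdot}$; and (ii) under these constraints the formula for $v$ in \eqref{Sij} collapses to $\delta u^*_\sigma$.

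For compatibility, the principal tool is the intertwining identity $\Og_1\bT=-\sigma\bT\Og_2^*$ together with its inverse $\Og_1^{-1}\bT=-\sigma\bT(\Og_2^*)^{-1}$. Applying complex conjugation and the $\sigma$-action to the plane-wave equations \eqref{DES-p-C}--\eqref{DES-q-C} for $\br_2$, and then multiplying on the left by $\varepsilon\bT$, these identities transform them into the corresponding equations for $\br_1=\varepsilon\bT\br^*_{2,\sigma}$; hence the initial-data relation $\bC_1=\varepsilon\bT\bC_2^*$ propagates to all $(n,m)$. Next, substituting $\bM_1=-\delta\sigma\bT\bM^*_{2,\sigma}\bT^*$ into the Sylvester equation $\Og_1\bM_1-\bM_1\Og_2=\br_1\tc_2$ and using both $\Og_1\bT=-\sigma\bT\Og_2^*$ and its conjugate $\bT^*\Og_2=-\sigma\Og_1^*\bT^*$ reduces the left-hand side to $\delta\bT(\Og_2^*\bM^*_{2,\sigma}-\bM^*_{2,\sigma}\Og_1^*)\bT^*=\delta\bT(\br_2\tc_1)^*_\sigma\bT^*$. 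The constraint $\tc_1=\varepsilon\tc_2^*\bT^{-1}$ together with $\varepsilon^2=\varepsilon^{*2}=\delta$, which forces $\delta\varepsilon^*=\varepsilon$, then makes the right-hand side match $\br_1\tc_2$ exactly.

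With compatibility settled, the reduction itself is a short telescoping. Applying $^*$ and the $\sigma$-action to $\bM_1=-\delta\sigma\bT\bM^*_{2,\sigma}\bT^*$ (using that $\delta,\sigma\in\{\pm1\}$ are real) yields $\bM_2=-\delta\sigma(\bT^*)^{-1}\bM^*_{1,\sigma}\bT^{-1}$, so $\bM_1\bM_2=\bT\bM^*_{2,\sigma}\bM^*_{1,\sigma}\bT^{-1}$ and therefore $(\bI-\bM_1\bM_2)^{-1}=\bT(\bI-\bM^*_{2,\sigma}\bM^*_{1,\sigma})^{-1}\bT^{-1}$. Inserting this together with $\tc_1=\varepsilon\tc_2^*\bT^{-1}$ and $\br_1=\varepsilon\bT\br^*_{2,\sigma}$ into \eqref{Sij} makes the $\bT$-factors telescope, leaving
\[v=\varepsilon^2\,\tc_2^*(\bI-\bM^*_{2,\sigma}\bM^*_{1,\sigma})^{-1}\br^*_{2,\sigma}=\delta\,\bigl[\tc_2(\bI-\bM_2\bM_1)^{-1}\br_2\bigr]^*_\sigma=\delta u^*_\sigma,\]
which is exactly the reduction needed for \eqref{cnd-sG-so} to solve \eqref{cnd-sG}.

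The main obstacle is the Sylvester compatibility in step (i): one has to juggle $\bT$, $\bT^*$, and the intertwining identity in just the right order so that the source term $\br_1\tc_2$ on the right of the Sylvester equation for $\bM_1$ is recovered from the conjugate/$\sigma$-transform of the source $\br_2\tc_1$. The plane-wave verification is algebraically routine once the intertwining is in hand, and the final telescoping in step (ii) then requires nothing more than the reality conditions $\varepsilon^2=\varepsilon^{*2}=\delta$.
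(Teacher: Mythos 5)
Your proposal is correct and follows essentially the same route as the paper: it establishes the relations $\br_1=\varepsilon \bT \br^*_{2,\sigma}$ and $\bM_1=-\delta\sigma\bT \bM^*_{2,\sigma}\bT^*$ consistently with the DES via the intertwining $\Og_1\bT=-\sigma\bT\Og^*_2$, and then telescopes to $v=\varepsilon^2 u^*_\sigma=\delta u^*_\sigma$, so the known Cauchy matrix solution of the dAKNS(-1) system collapses to the reduction \eqref{re-dsG} and hence $u$ solves \eqref{cnd-sG}. The only cosmetic difference is that the paper derives the $\br_1$- and $\bM_1$-relations from \eqref{cnd-sG-at-eq} (using the explicit formula \eqref{rj-solu} and the uniqueness of the Sylvester solution), whereas you verify their compatibility with the shift and Sylvester equations directly; the computations are equivalent.
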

\begin{proof}

According to the assumption \eqref{cnd-sG-at-eq}, we have
\begin{align}
& \br_1 = (p\bI+\Og_{1})^{n}(p\bI-\Og_{1})^{-n}(q\Og_{1}+\bI)^{m}(q\Og_{1}-\bI)^{-m}\bC_{1} \nn \\
& \quad = \bT(p\bI-\sigma\Og^{*}_{2})^{n}(p\bI+\sigma\Og^{*}_{2})^{-n}(q\sigma{\Og_{2}^{*}}-\bI)^{m}(q\sigma{\Og_{2}^{*}}+\bI)^{-m}\bT^{-1} \bC_{1} \nn \\
& \quad = \bT(p\bI-\Og^{*}_{2})^{\sigma n}(p\bI+\Og^{*}_{2})^{-\sigma n}(q{\Og_{2}^{*}}-\bI)^{\sigma m}(q{\Og_{2}^{*}}+\bI)^{-\sigma m}\bT^{-1} \bC_{1} \nn \\
& \quad = \varepsilon \bT \br^*_{2,\sigma},
\end{align}
where we have used the identity $(a\bI+\sigma \bL)(a\bI-\sigma\bL)^{-1}=(a\bI+\bL)^{\sigma}(a\bI-\bL)^{-\sigma}$
with $a\in \mathbb{C}$. Substituting $\Og_1=-\sigma\bT\Og^*_2\bT^{-1}$ into the first equation in \eqref{DES-M12-C} and
through a straightforward calculation, we find
\begin{align}
& \Og^*_2(\sigma\bT^{-1}\bM_1\bT^{*^{-1}}+\delta\bM^*_{2,\sigma})
-(\sigma \bT^{-1}\bM_1 \bT^{*^{-1}}+\delta\bM^*_{2,\sigma})\Og^*_1=0,
\end{align}
which yields the third relation in \eqref{cnd-sG-M1M12}. With \eqref{cnd-sG-M1M12}
and \eqref{cnd-sG-at-eq} at hand, we immediately have
\begin{align*}
v=\tc_{1}(\bI-\bM_1 \bM_2)^{-1}\br_1=
\varepsilon^2 \tc_2^{*}(\bI-\bM^*_{2,\sigma}\bM^*_{1,\sigma})^{-1}\br^*_{2,\sigma}=\delta u^*_\sigma,
\end{align*}
which coincides with the reduction \eqref{re-dsG} for the cnd-sG equation \eqref{cnd-sG}.
\end{proof}

In terms of the Theorem \ref{so-cnd-sG-Thm}, we know that solution to the cnd-sG equation \eqref{cnd-sG} reads
\begin{align}
\label{cnd-sG-so-T}
u=\tc_{2}(\bI+\delta\sigma\bM_2\bT \bM^*_{2,\sigma}\bT^*)^{-1}\br_2,
\end{align}
where $\br_2$ is given by \eqref{rj-solu} and $\bM_2$ and $\bT$ are determined by
\begin{align}
\label{DES-M12-Sim}
\Og_2 \bM_2\bT+\sigma\bM_2\bT\Og^*_2=\varepsilon\br_2\, \tc^{*}_{2}.
\end{align}
We denote $\bM_2\bT\rightarrow\dc{\bM}_2$ and simplify solution \eqref{cnd-sG-so-T} together with Sylvester equation
\eqref{DES-M12-Sim} as
\begin{subequations}
\label{cnd-sG-so-nT}
\begin{align}
& \label{cnd-sG-so-nT-a}
u=\tc_{2}(\bI+\delta\sigma\dc{\bM}_2 \dc{\bM}^*_{2,\sigma})^{-1}\br_2, \\
& \label{KM-rtc} \Og_2 \dc{\bM}_2+\sigma\dc{\bM}_2\Og^*_2=\varepsilon\br_2\, \tc^{*}_{2}.
\end{align}
\end{subequations}
As three examples, we just list
1-soliton solution, 2-soliton solutions and the simplest Jordan-block solution.
For the sake of brevity, we introduce two notations
\begin{align}
\label{nota}
\alpha_{ij}=k_i+\sigma k_j^{*},\quad  \rho_j=\rho_j(n,m)=\bigg(\frac{p-k_j}{p+k_j}\bigg)^{n}\bigg(\frac{qk_j-1}{qk_j+1}\bigg)^{m}\rho_j^0,
\end{align}
as well as $\alpha=\alpha_{11}|_{k_1\rightarrow k}$ and $\rho=\rho_1|_{k_1\rightarrow k,\rho_1^0\rightarrow \rho^0}$,
where $\{k_j,~\rho_j^0,~k,~\rho^0\}$ are complex constants. $\rho_j$ and $\rho$ play the role of discrete plane-wave factors.

When $N=1$, we denote
\begin{align}
\label{K-tc-1}
\Og_2=k_1, \quad \tc_{2}=c_1,
\end{align}
and write down 1-soliton solution
\begin{align}
\label{u-soli-1}
u=\frac{c_1\alpha_{11}^2\rho_1}{\alpha_{11}^2+\delta |c_1|^2\rho_1\rho^*_{1,\sigma}}
\end{align}
with module $|\cdot|$. When $N=2$, we take
\begin{align}
\label{K-tc-2}
\Og_2=\text{diag}(k_1,k_2), \quad \tc_{2}=(c_1,c_2).
\end{align}
In this case, the 2-soliton solutions
are expressed as $u=\dfrac{g_2}{g_1}$, where
\begin{subequations}
\begin{align}
& g_1=1+\delta\sum_{i=1}^2\sum_{j=1}^2
\left(\frac{c_ic_j^*\rho_i\rho^*_{j,\sigma}}{\alpha^2_{ij}}\right)
+\frac{|c_1c_2|^2|k_1-k_2|^4\rho_1\rho_2\rho^*_{1,\sigma}\rho^*_{2,\sigma}}
{\alpha^2_{11}\alpha^2_{12}\alpha^2_{21}\alpha^2_{22}}, \\
& g_2=c_1\rho_1+c_2\rho_2+\delta c_1c_2(k_1-k_2)^2\rho_1\rho_2
\bigg(\frac{c^*_1 \rho^*_{1,\sigma}}{\alpha^2_{11}\alpha^2_{21}}+\frac{
c^*_2\rho^*_{2,\sigma}}{\alpha^2_{12}\alpha^2_{22}}\bigg).
\end{align}
\end{subequations}
For presenting the simplest Jordan-block solution, we set $N=2$ together with
\begin{align}
\label{K2-tc-def-Jor}
\Og_2=\left(\begin{array}{cc}
k & 0   \\
1   & k
\end{array}\right),\quad \tc_{2}=(c_1,c_2).
\end{align}
Then we have
\begin{subequations}
\label{rJ-solu}
\begin{align}
\br_2=(\rho,\cd{\rho})^{\st}, \quad \dc{\bM}_2=\bF\bG\bH,
\end{align}
in which
\begin{align}
& \bF=\left(\begin{array}{cc}
\rho & 0   \\
\cd{\rho}  & \rho
\end{array}\right), \quad
\bG=\left(\begin{array}{cc}
\alpha^{-1} & \alpha^{-2}    \\
-\alpha^{-2}  & -2\alpha^{-3}
\end{array}\right), \quad
\bH=\left(\begin{array}{cc}
c_1 & c_2 \\
c_2  & 0
\end{array}\right),
\end{align}
\end{subequations}
where $\cd{\rho}=:\partial_k \rho$. Substituting \eqref{K2-tc-def-Jor} and \eqref{rJ-solu}
into \eqref{cnd-sG-so-nT-a}, we get the simplest Jordan-block solution
$u=\dfrac{h_2}{h_1}$ with
\begin{subequations}
\begin{align}
& h_1=\alpha^{8}+|c_{2}|^{4}\rho^{2}{\rho_{\sigma}^{*}}^{2}+\sigma\delta\big[c_{2}^{*}\alpha^{5}\cd{\rho}
(c_{2}\sigma\alpha\cd{\rho}^{*}_{\sigma}+(c_{1}\sigma\alpha+c_{2}\beta)\rho^{*}_{\sigma}) \nn \\
&\qquad +\alpha^{4}\rho\big(\rho^{*}_{\sigma}\big(c_{1}\alpha\zeta+c_{2}(c_{1}^{*}\alpha\beta
-2c_{2}^{*}(\sigma+\beta))\big)+c_{2}\alpha\zeta\cd{\rho}_{\sigma}^{*}\big)\big], \\
& h_2=\alpha^{3}\big[\alpha^{5}(c_{1}\rho+c_{2}\cd{\rho})+\sigma\delta\rho^{2}\big(c_{2}\alpha(|c_{2}|^{2}-\sigma\alpha\gamma)\cd{\rho}_{\sigma}^{*} \nn \\
&\qquad -(\alpha(c_{1}\sigma\alpha\gamma+c_{2}\beta(c_{1}c_{2}^{*}+\gamma))
+c_{2}^{2}(2c_{2}^{*}\beta-c_{1}^{*}\sigma\alpha))\rho^{*}_{\sigma}\big)\big],
\end{align}
\end{subequations}
where $\beta=\sigma-1,~~\gamma=c_{1}c_{2}^{*}-c_{1}^{*}c_{2},~~
\zeta=c_{1}^{*}\sigma \alpha-c_{2}^{*}\beta$.

To understand the dynamic behavior of soliton solution \eqref{u-soli-1}, we set
\begin{align}
\label{decom}
k_1=\mu+i\nu, \quad \text{with} \quad \mu,~~\nu \in \mathbb{R}.
\end{align}
For $\sigma=1$, i.e., local case, the carrier wave is expressed as
\begin{align}
\label{u-d-ca-I}
|u|^2=\Biggl\{
\begin{array}{ll}
\mu^2\sech^2(\frac{1}{2}\ln A+\ln\frac{|c_1\rho_1^0|}{2|\mu|}),& \text{with} \quad \delta=1, \\
\mu^2\csch^2(\frac{1}{2}\ln A+\ln\frac{|c_1\rho_1^0|}{2|\mu|}),& \text{with} \quad \delta=-1,\\
\end{array}
\end{align}
where $A=\left(\dfrac{(p-\mu)^{2}+\nu^{2}}{(p+\mu)^{2}+\nu^{2}}\right)^{n}\left( \dfrac{(q\mu-1)^{2}+q^{2}\nu^{2}}{(q\mu+1)^{2}+q^{2}\nu^{2}}\right)^{m}$.
As $\delta=1$, the solution \eqref{u-d-ca-I} is nonsingular and provides a bell-type traveling wave which
propagates with initial phase $\ln\frac{|c_1\rho_1^0|}{2|\mu|}$. The amplitude is approximately equal to
$\mu^2$. As $\delta=-1$, the solution \eqref{u-d-ca-I} has large value in the neighborhood of straight line
\begin{align}
\frac{y}{2}\ln \frac{(p-\mu)^2+\nu^2}{(p+\mu)^2+\nu^2}
+\frac{z}{2}\ln \frac{(q\mu-1)^2+(q\nu)^2}{(q\mu+1)^2+(q\nu)^2}+\ln\frac{|c_1\rho_1^0|}{2|\mu|}=0, \quad y,z\in \mathbb{R}.
\end{align}
We illustrate these two solitons in Figure 1.

\begin{center}
\begin{picture}(120,100)
\put(-120,-23){\resizebox{!}{4cm}{\includegraphics{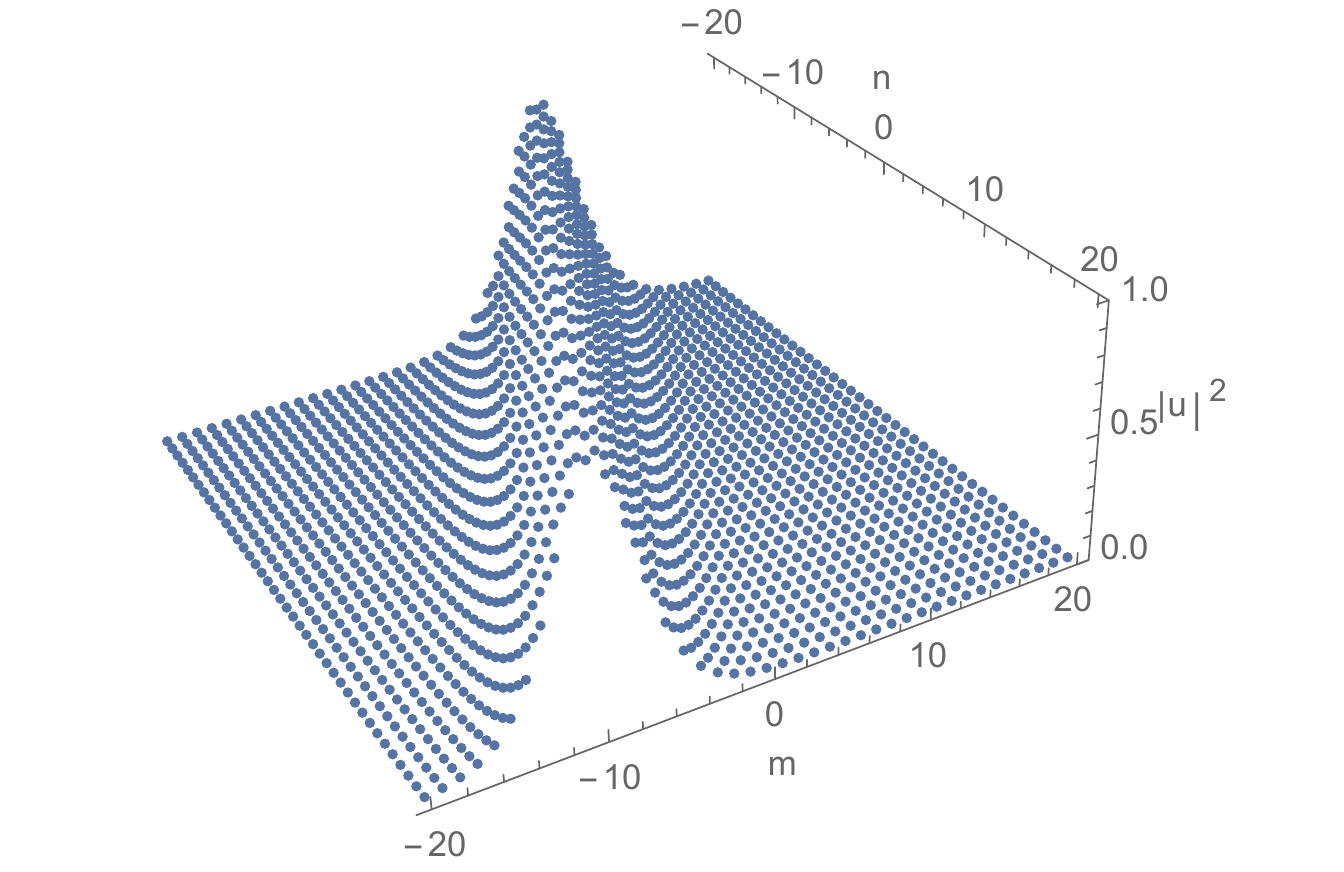}}}
\put(100,-23){\resizebox{!}{4cm}{\includegraphics{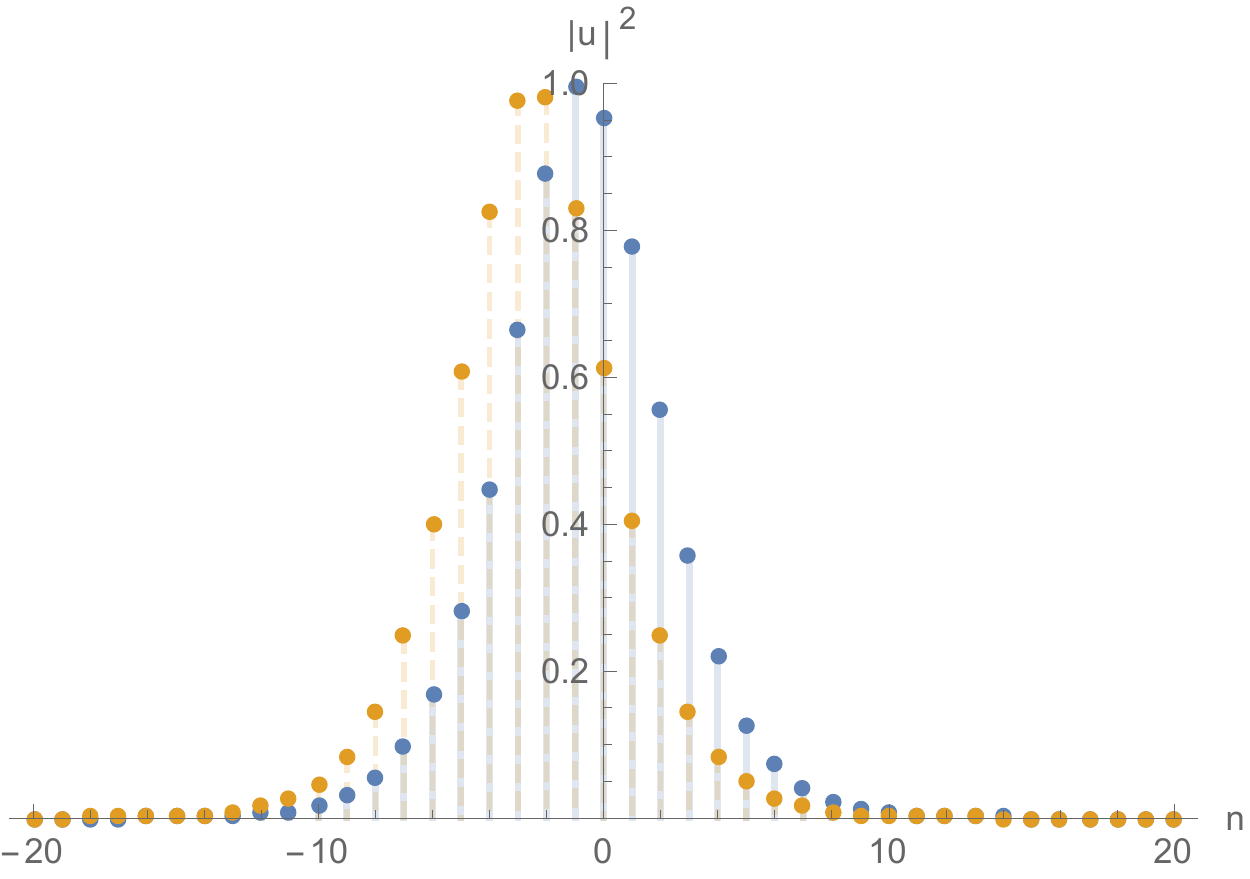}}}
\end{picture}
\end{center}
\vskip 20pt
\begin{center}
\begin{minipage}{15cm}{\footnotesize
\quad\qquad\qquad\qquad\qquad\qquad(a)\qquad\qquad\qquad\qquad\qquad\qquad\qquad\qquad\qquad\qquad\quad\quad\quad (b)}
\end{minipage}
\end{center}
\vskip 10pt
\begin{center}
\begin{picture}(120,80)
\put(-120,-23){\resizebox{!}{4cm}{\includegraphics{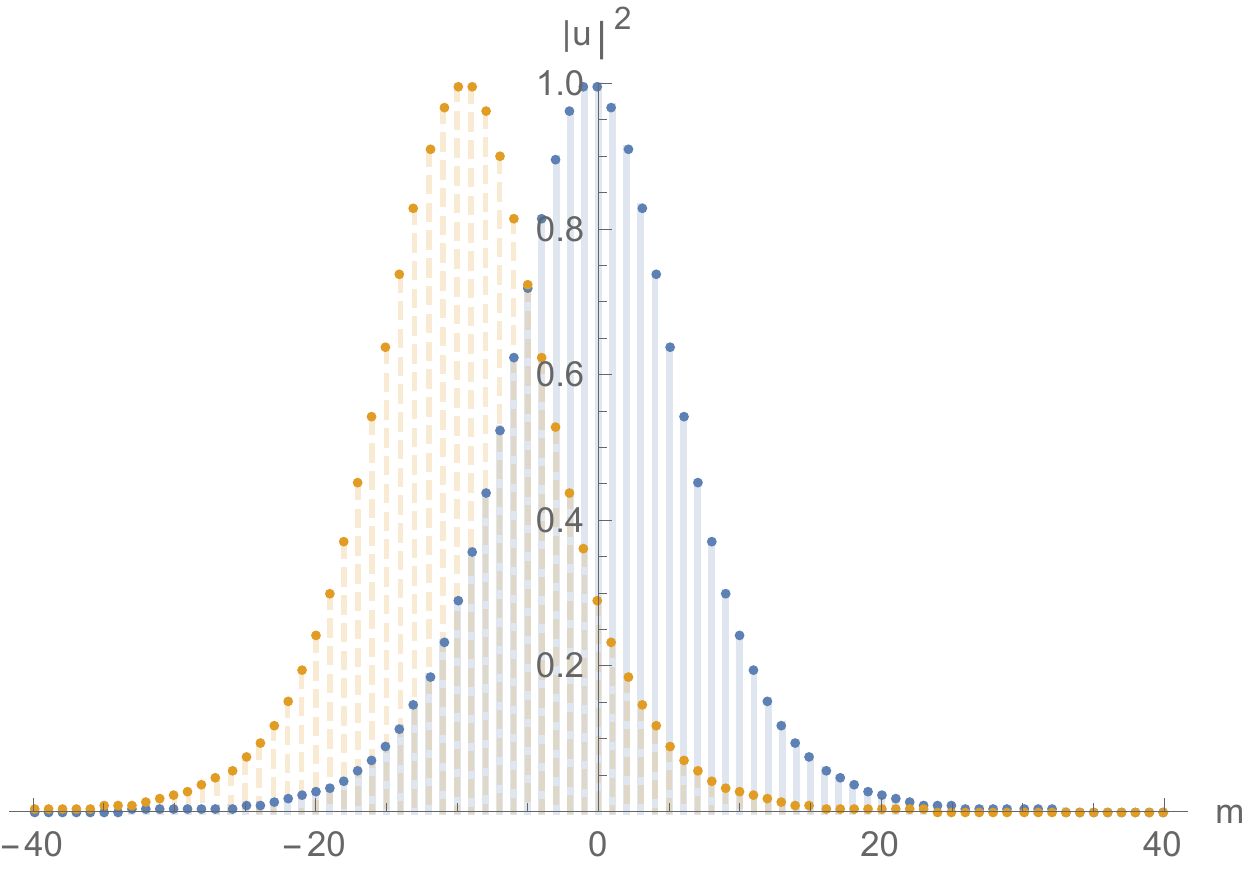}}}
\put(100,-23){\resizebox{!}{4cm}{\includegraphics{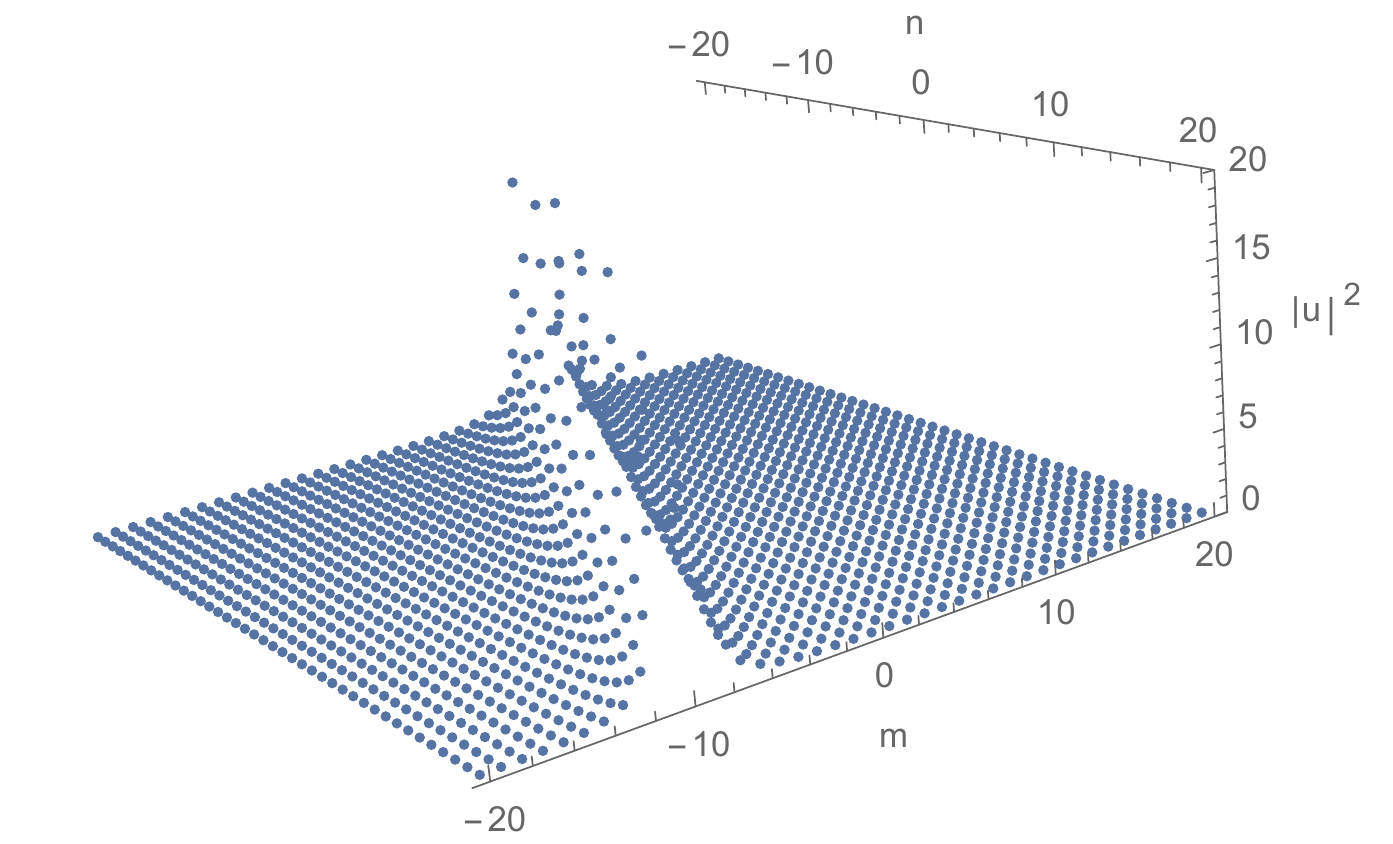}}}
\end{picture}
\end{center}
\vskip 20pt
\begin{center}
\begin{minipage}{15cm}{\footnotesize
\quad\qquad\qquad\qquad\qquad\qquad(c)\qquad\qquad\qquad\qquad\qquad\qquad\qquad\qquad\qquad\qquad\quad\quad\quad (d)\\
{\bf Fig. 1} shape and motion with $|u|^2$ given by \eqref{u-d-ca-I} for $k_1=1+3i, p=2, q=1.5, \rho_1^0=1$ and $c_1=1+i$.
(a) 3D-plot for $\delta=1$.
(b) waves in blue and yellow stand for plot (a) at $m=-1$ and $m=3$, respectively.
(c) waves in blue and yellow stand for plot (a) at $n=-1$ and $n=3$, respectively.
(d) 3D-plot for $\delta=-1$.}
\end{minipage}
\end{center}

For $\sigma=-1$, i.e., nonlocal case, the wave package reads
\begin{align}
\label{u-d-ca-II}
|u|^2=\dfrac{4\nu^{2}A}{B+B^{-1}-2\delta CA^{-1}\cos(2n\arctan \theta_1+2m\arctan\theta_2)},
\end{align}
where $A$ is defined by \eqref{u-d-ca-I} and
\begin{subequations}
\begin{align}
& \label{B-def} B=4\nu^{2}|c_1\rho_1^0|^{-2}, \quad \theta_1=\frac{2 p\nu}{\mu^{2}+\nu^{2}-p^{2}}, \quad
\theta_2=\frac{2q\nu}{q^{2}(\mu^{2}+\nu^{2})-1}, \\
& C=\dfrac{((p^{2}-\mu^{2}-\nu^{2})^{2}+4p^{2}\nu^{2})^{n}((q^{2}(\mu^{2}+\nu^{2})-1)^{2}+4q^{2}\nu^{2})^{m}}
{((p+\mu)^{2}+\nu^{2})^{2n}((q\mu+1)^{2}+q^{2}\nu^{2})^{2m}}.
\end{align}
\end{subequations}
The solution \eqref{u-d-ca-II} has oscillatory phenomenon since
the involvement of cosine function in denominator.
We illustrate soliton \eqref{u-d-ca-II} in Figure 2.

\begin{center}
\begin{picture}(120,100)
\put(-120,-23){\resizebox{!}{4cm}{\includegraphics{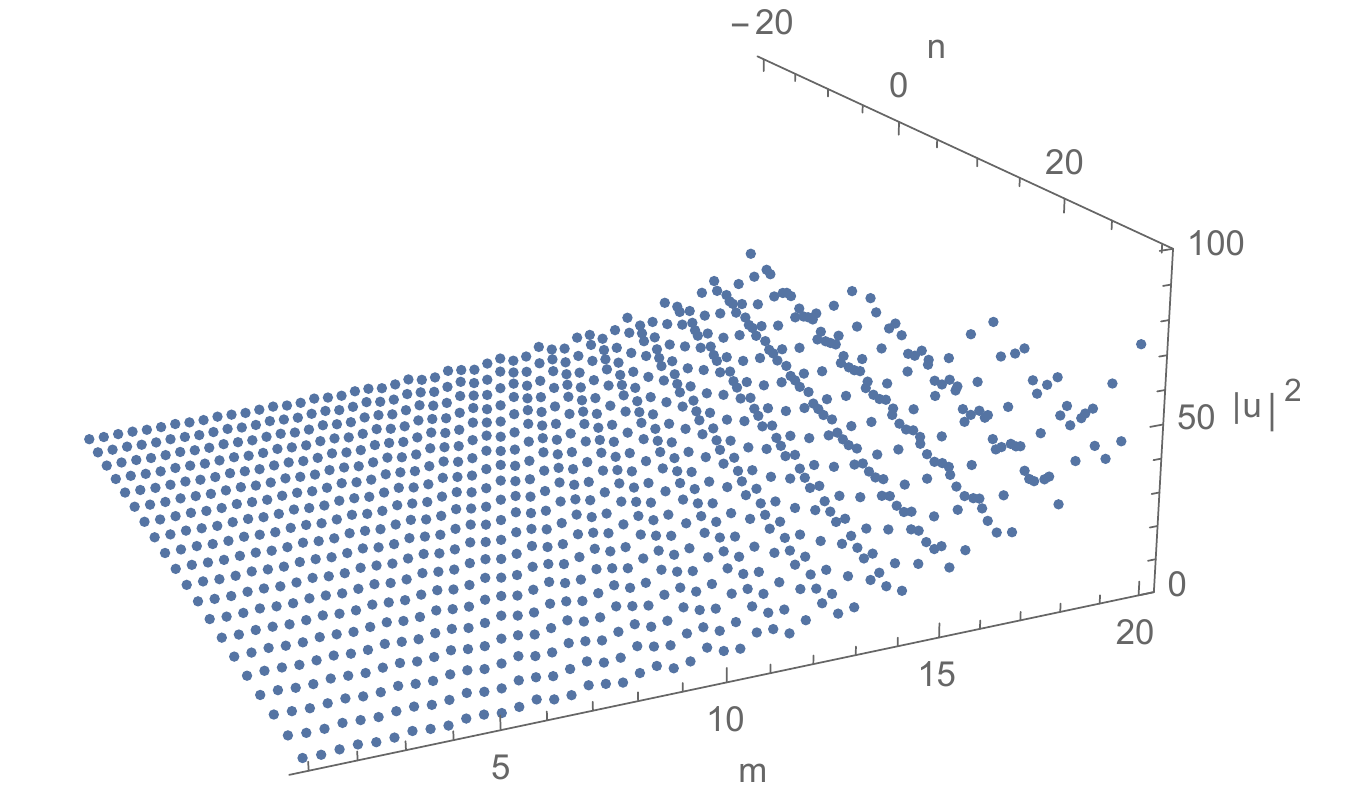}}}
\put(100,-23){\resizebox{!}{4cm}{\includegraphics{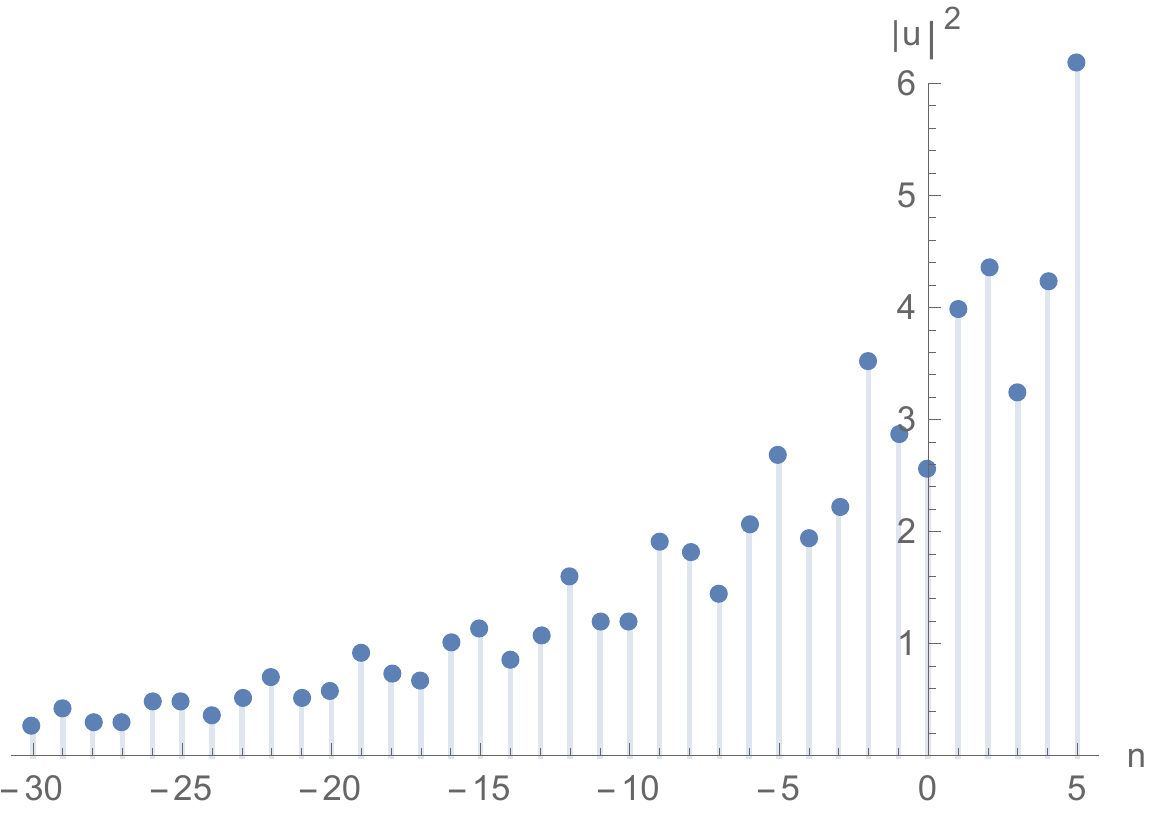}}}
\end{picture}
\end{center}
\vskip 20pt
\begin{center}
\begin{minipage}{15cm}{\footnotesize
\quad\qquad\qquad\qquad\qquad\qquad(a)\qquad\qquad\qquad\qquad\qquad\qquad\qquad\qquad\qquad\qquad\quad \qquad \quad (b)}
\end{minipage}
\end{center}
\vskip 10pt
\begin{center}
\begin{picture}(120,80)
\put(-120,-23){\resizebox{!}{4cm}{\includegraphics{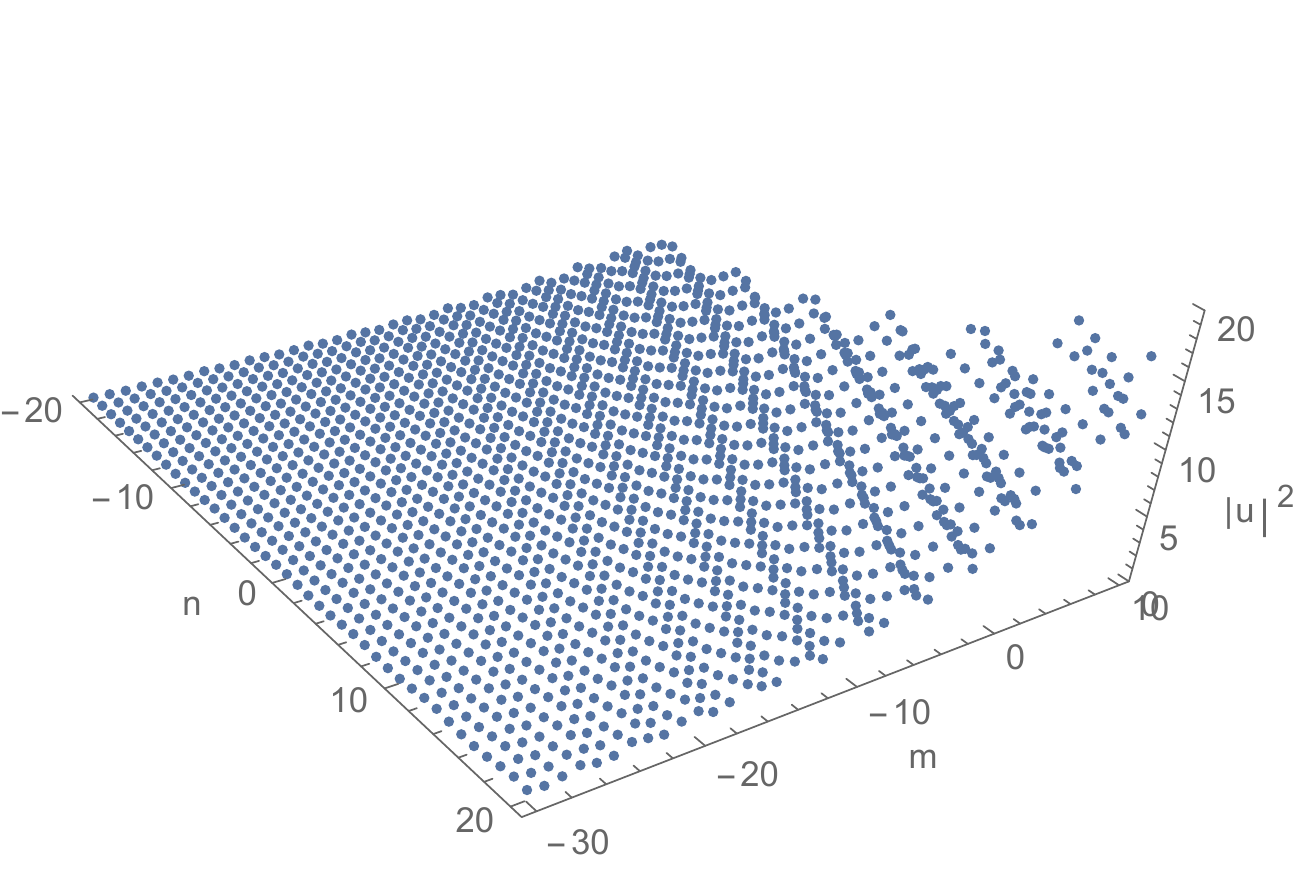}}}
\put(100,-23){\resizebox{!}{4cm}{\includegraphics{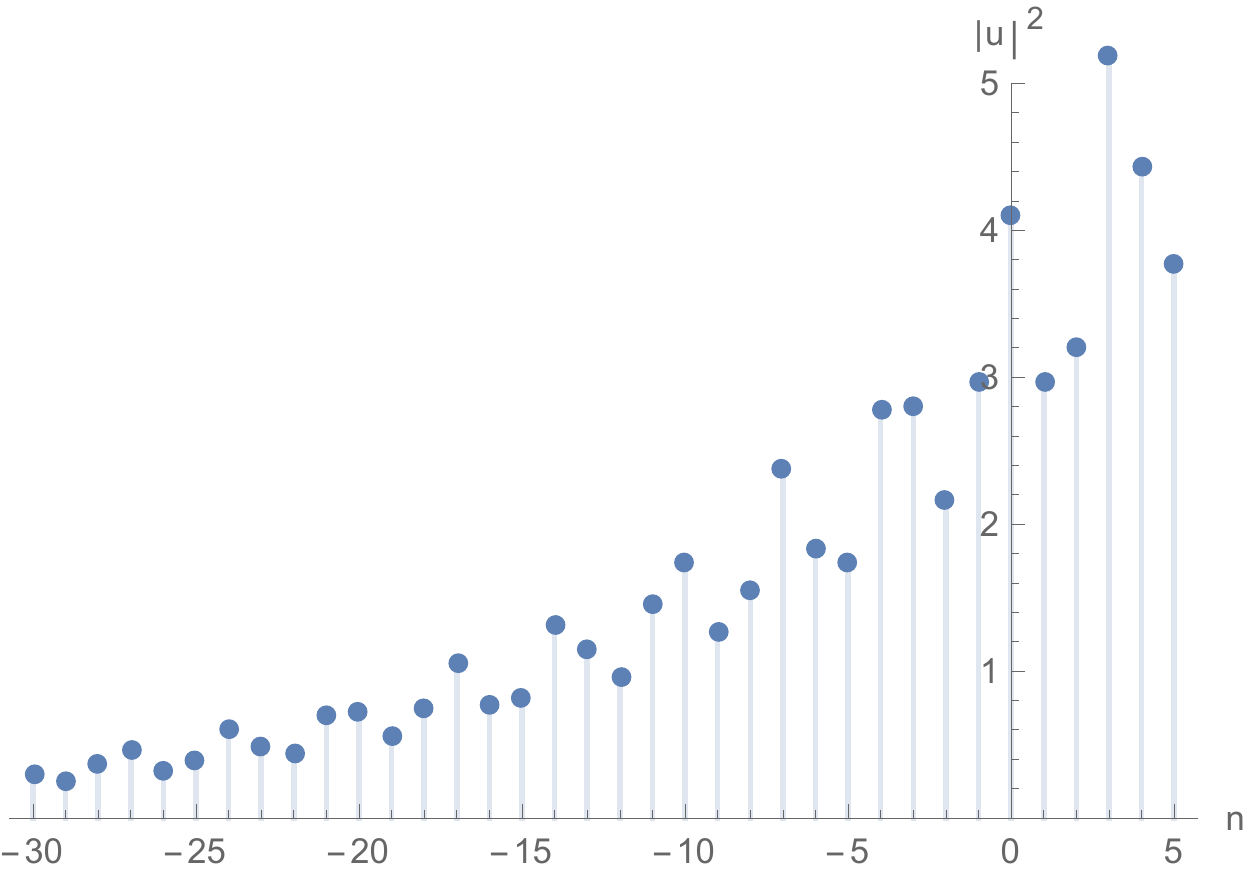}}}
\end{picture}
\end{center}
\vskip 20pt
\begin{center}
\begin{minipage}{15cm}{\footnotesize
\quad\qquad\qquad\qquad\qquad\qquad(c)\qquad\qquad\qquad\qquad\qquad\qquad\qquad\qquad\qquad\qquad\quad \qquad \quad (d)\\
{\bf Fig. 2} shape and motion with $|u|^2$ given by \eqref{u-d-ca-II} for $k_1=-0.1+2i, p=1, q=2, \rho_1^0=1$ and $c_1=1+i$.
(a) 3D-plot for $\delta=1$. (b) wave of plot (a) at $m=10$.
(c) 3D-plot for $\delta=-1$.
(d) wave of plot (c) at $m=10$.}
\end{minipage}
\end{center}

\section{Local and nonlocal complex reduction for the sdAKNS(-1) equation \eqref{sdnAKNS}}\label{sec-3}

In this section, we shall use a similar strategy to consider the local and nonlocal complex reduction for the sdAKNS(-1) equation \eqref{sdnAKNS}.

We notice that
\begin{align}
\label{UV-solu}
\mathbb{U}=\tc_{2}(\bI-\bM_2\bM_1)^{-1}\br_2,\quad \mathbb{V}=\tc_{1}(\bI-\bM_1 \bM_2)^{-1}\br_1,
\end{align}
solve the sdAKNS(-1) equation \eqref{sdnAKNS} \cite{Zhao-2016}, where
the components $\{\bM_1,~\bM_2,~\br_j,~\tc_j\}$ satisfy the equations \eqref{DES-M12-C},
\eqref{DES-p-C} together with
\begin{align}
\label{DES-sd-r12}
\br'_1=-2\Og_{1}^{-1}\br_1,\quad \br'_2=2\Og_{2}^{-1}\br_2.
\end{align}

Under the complex integrable symmetry reduction
\begin{align}
\label{re-sdsG}
\mathbb{V}=\delta \mathbb{U}^*_{\sigma}, \quad \delta,~~\sigma=\pm 1,
\end{align}
the coupled equations \eqref{sdnAKNS-a} and \eqref{sdnAKNS-b} are compatible and give rise to the cnsd-sG equation
\begin{align}
\label{cnsd-sG}
(\mathbb{U}'-\wt{\mathbb{U}}')(p^{2}-\delta(\mathbb{U}+\wt{\mathbb{U}})(\mathbb{U}^*_{\sigma}+\wt{\mathbb{U}}^*_{\sigma}))^{\frac{1}{2}}
+(\mathbb{U}+\wt{\mathbb{U}})((1-\delta \mathbb{U}'(\mathbb{U}_{\sigma}^{*})')^{\frac{1}{2}}
+(1-\delta\wt{\mathbb{U}}'(\wt{\mathbb{U}}_{\sigma}^{*})')^{\frac{1}{2}})=0,
\end{align}
which is the complex semi-discrete sG equation as $\sigma=1$, respectively,
the complex reverse-$(n,t)$ semi-discrete sG equation as $\sigma=-1$. Similar to the
equation \eqref{cnd-sG}, the cnsd-sG equation \eqref{cnsd-sG}
is preserved under transformation $u\rightarrow -u$. And equation \eqref{cnsd-sG} with $(\sigma,\delta)=(\pm 1,1)$ and with $(\sigma,\delta)=(\pm 1,-1)$ can be
transformed into each other by taking $u\rightarrow iu$.

Under the assumption $N_1=N_2=N$, solution to the equation \eqref{cnsd-sG} can be summarized by the following theorem, where
we skip the proof because it is very similar to the cnd-sG case.
\begin{Thm}
\label{so-cnsd-sG-Thm}
The function
\begin{align}
\label{cnsd-sG-so}
\mathbb{U}=\tc_{2}(\bI-\bM_2\bM_1)^{-1}\br_2,
\end{align}
solves the cnsd-sG equation \eqref{cnsd-sG}, provided that
the entities satisfy DES \eqref{DES-M12-C},
\eqref{DES-p-C}, \eqref{DES-sd-r12} and simultaneously obey the constraints
\eqref{cnd-sG-M1M12} and \eqref{cnd-sG-at-eq}.
\end{Thm}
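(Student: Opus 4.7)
The plan is to mirror the proof of Theorem~\ref{so-cnd-sG-Thm}: the constraints \eqref{cnd-sG-M1M12} and \eqref{cnd-sG-at-eq}, the Sylvester equations \eqref{DES-M12-C}, and the $n$-direction shift \eqref{DES-p-C} are all identical to the fully discrete case. The only structural difference is that the $m$-shift \eqref{DES-q-C} has been replaced by the continuous evolution \eqref{DES-sd-r12}. Hence the bulk of the previous argument carries over verbatim, and the single genuinely new point is to check that the reduction $\br_1=\varepsilon\bT\br^{*}_{2,\sigma}$ is compatible with the $t$-flow $\br'_2=2\Og^{-1}_2\br_2$, $\br'_1=-2\Og^{-1}_1\br_1$.

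First I would integrate \eqref{DES-p-C} and \eqref{DES-sd-r12} to obtain the continuous-time analogue of \eqref{rj-solu},
\begin{align*}
\br_j=(p\bI+(-1)^{j-1}\Og_j)^{n}(p\bI+(-1)^{j}\Og_j)^{-n}\,e^{2(-1)^{j-1}\Og_j^{-1}t}\,\bC_j.
\end{align*}
The same telescoping applied in Theorem~\ref{so-cnd-sG-Thm} converts the $n$-factor of $\br_1$ into the $n$-factor of $\bT\br^{*}_{2,\sigma}$, so it only remains to verify $e^{-2\Og_1^{-1}t}\,\bT=\bT\,e^{2\sigma(\Og_2^{*})^{-1}t}$. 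This follows at once from \eqref{cnd-sG-at-eq}: the relation $\Og_1\bT=-\sigma\bT\Og_2^{*}$ gives $\Og_1^{-1}\bT=-\sigma\bT(\Og_2^{*})^{-1}$, which lifts to the matrix exponential by the usual power-series argument. Combined with $\bC_1=\varepsilon\bT\bC_2^{*}$, this produces $\br_1=\varepsilon\bT\br^{*}_{2,\sigma}$ exactly as in the discrete case.

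Once the $\br_j$-reduction is in place, the derivation of $\bM_1=-\delta\sigma\bT\bM^{*}_{2,\sigma}\bT^{*}$ from the first Sylvester equation in \eqref{DES-M12-C} proceeds as in Theorem~\ref{so-cnd-sG-Thm} without modification, since it only uses the algebraic identity $\Og_1=-\sigma\bT\Og_2^{*}\bT^{-1}$ together with $\tc_1=\varepsilon\tc_2^{*}\bT^{-1}$. Substituting these constraints into $\mathbb{V}=\tc_1(\bI-\bM_1\bM_2)^{-1}\br_1$ and using $\varepsilon^2=\delta$ then yields $\mathbb{V}=\delta\tc_2^{*}(\bI-\bM^{*}_{2,\sigma}\bM^{*}_{1,\sigma})^{-1}\br^{*}_{2,\sigma}=\delta\mathbb{U}^{*}_\sigma$, which is precisely the reduction \eqref{re-sdsG}.

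The main obstacle is minor relative to Theorem~\ref{so-cnd-sG-Thm}: it consists of the bookkeeping required to confirm that the $\sigma$-action on the continuous variable, $t\mapsto\sigma t$, produces the same sign in the exponent as the $n\mapsto\sigma n$ action produces in the discrete shift factors. Because $\sigma=\pm 1$ commutes with every matrix and $(\Og_2^{-1})^{*}=(\Og_2^{*})^{-1}$, this reduces to the scalar identity $\sigma\cdot t=t\cdot\sigma$, after which the proof is complete.
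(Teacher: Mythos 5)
Your proposal is correct and follows precisely the route the paper intends (the paper omits this proof as ``very similar'' to Theorem \ref{so-cnd-sG-Thm}): you reuse the discrete argument verbatim and correctly isolate the only new ingredient, namely that $\br_1=\varepsilon\bT\br^{*}_{2,\sigma}$ remains compatible with the $t$-flow \eqref{DES-sd-r12}, which you justify through $\Og_1^{-1}\bT=-\sigma\bT(\Og_2^{*})^{-1}$ lifted to the matrix exponential, after which the Sylvester-equation step and the computation $\mathbb{V}=\delta\mathbb{U}^{*}_{\sigma}$ go through unchanged. One small slip: in your displayed formula for $\br_j$ the exponent sign is reversed --- it should be $e^{2(-1)^{j}\Og_j^{-1}t}$, i.e.\ $e^{-2\Og_1^{-1}t}$ for $j=1$ and $e^{2\Og_2^{-1}t}$ for $j=2$, consistent with \eqref{DES-sd-r12} and \eqref{r2-solu} --- but the identity you actually verify immediately afterwards uses the correct signs, so the argument stands.
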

We find that solution for the cnsd-sG equation \eqref{cnsd-sG} is expressed by
\begin{align}
\label{cnsd-sG-so-nT-a}
\mathbb{U}=\tc_{2}(\bI+\delta\sigma\dc{\bM}_2 \dc{\bM}^*_{2,\sigma})^{-1}\br_2,
\end{align}
in which the entities still satisfy the Sylvester equation \eqref{KM-rtc} but with
\begin{align}
\label{r2-solu}
\br_2=(p\bI-\Og_2)^{n}(p\bI+\Og_2)^{-n}\mbox{exp}(2\Og_{2}^{-1}t)\bD_2,
\end{align}
where $\bD_2$ is a $N$-th order constant column vector.

With \eqref{K-tc-1} we identify the 1-soliton solution to equation \eqref{cnsd-sG}
\begin{align}
\label{u-soli-sd}
\mathbb{U}=\frac{c_1 \alpha_{11}^2\varrho_1}{\alpha_{11}^2+\delta |c_1|^2\varrho_1\varrho^*_{1,\sigma}},
\end{align}
where the semi-discrete plane-wave factor $\varrho_1$ is given by $\varrho_1=\big(\frac{p-k_1}{p+k_1}\big)^{n}\exp(\frac{2t}{k_1})\varrho_1^0$.
Solution \eqref{u-soli-sd} can be also obtained from \eqref{u-soli-1} by replacing $\rho_1$ by $\varrho_1$.
This operation is also valid for the 2-soliton solutions and the simplest Jordan-block solution. Here we skip the
explicit expressions for these two solutions.


With the help of decomposition \eqref{decom}, the carrier wave of \eqref{u-soli-sd} with $\sigma=1$ is expressed as
\begin{align}
\label{u-d-ca-P}
|\mathbb{U}|^2=\Biggl\{
\begin{array}{ll}
\mu^2\sech^2(\frac{1}{2}\ln D+\ln\frac{|c_1\varrho_1^0|}{2|\mu|}),& \text{with} \quad \delta=1, \\
\mu^2\csch^2(\frac{1}{2}\ln D+\ln\frac{|c_1\varrho_1^0|}{2|\mu|}),& \text{with} \quad \delta=-1, \\
\end{array}
 \end{align}
where $D=\left(\dfrac{(p-\mu)^{2}+\nu^{2}}{(p+\mu)^{2}+\nu^{2}}\right)^n\exp(\frac{4\mu t}{\mu^2+\nu^2})$.
As $\delta=1$, the solution \eqref{u-d-ca-P} is nonsingular and the wave
propagates with initial phase $\ln\frac{|c_1\varrho_1^0|}{2|\mu|}$, amplitude $\mu^2$,
top trajectory
\begin{align}
\label{tt}
\frac{n}{2}\ln \frac{(p-\mu)^2+\nu^2}{(p+\mu)^2+\nu^2}
+\dfrac{2\mu t}{\mu^{2}+\nu^{2}}+\ln\frac{|c_1\varrho_1^0|}{2|\mu|}=0,
\end{align}
and velocity $n'(t)=\dfrac{4\mu }{\mu^{2}+\nu^{2}}\ln^{-1} \frac{(p+\mu)^2+\nu^2}{(p-\mu)^2+\nu^2}$.
As $\delta=-1$, the solution \eqref{u-d-ca-P} has singularities along with point trace \eqref{tt}.
We illustrate these two solitons in Figure 3.

\begin{center}
\begin{picture}(120,100)
\put(-120,-23){\resizebox{!}{4cm}{\includegraphics{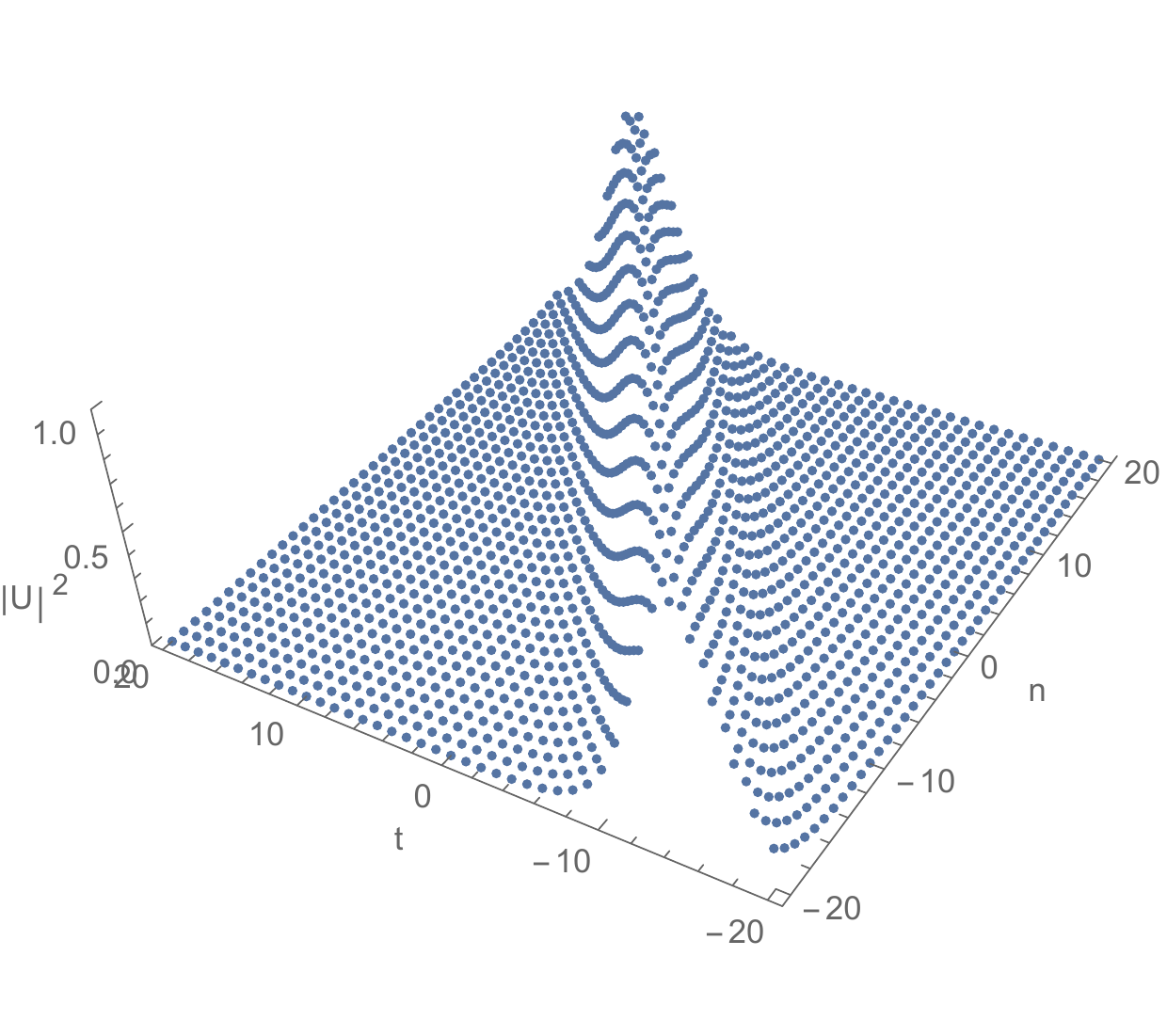}}}
\put(100,-23){\resizebox{!}{4cm}{\includegraphics{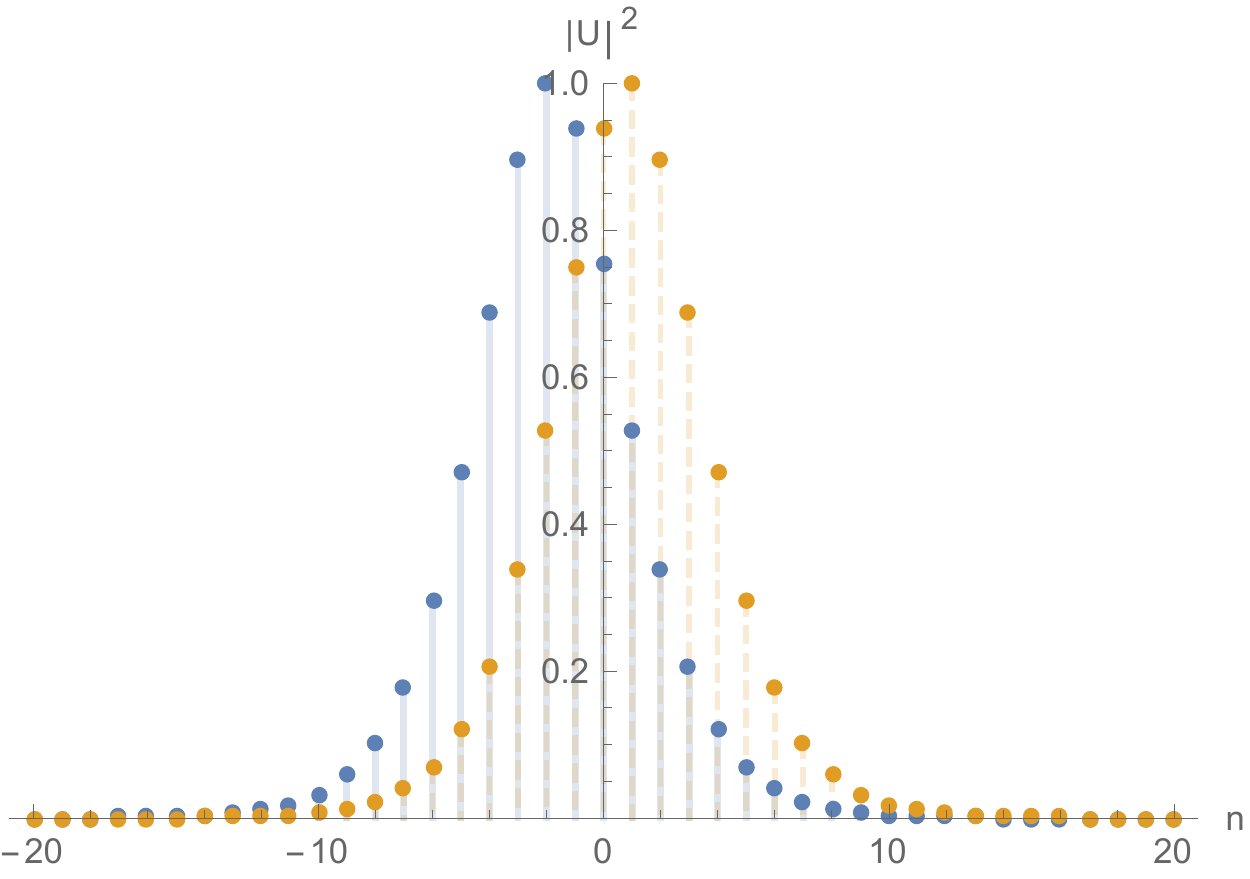}}}
\end{picture}
\end{center}
\vskip 20pt
\begin{center}
\begin{minipage}{15cm}{\footnotesize
\quad\qquad\qquad\qquad\qquad\qquad(a)\qquad\qquad\qquad\qquad\qquad\qquad\qquad\qquad\qquad\qquad\quad \qquad \quad (b)}
\end{minipage}
\end{center}
\vskip 10pt
\begin{center}
\begin{picture}(120,80)
\put(-120,-23){\resizebox{!}{4cm}{\includegraphics{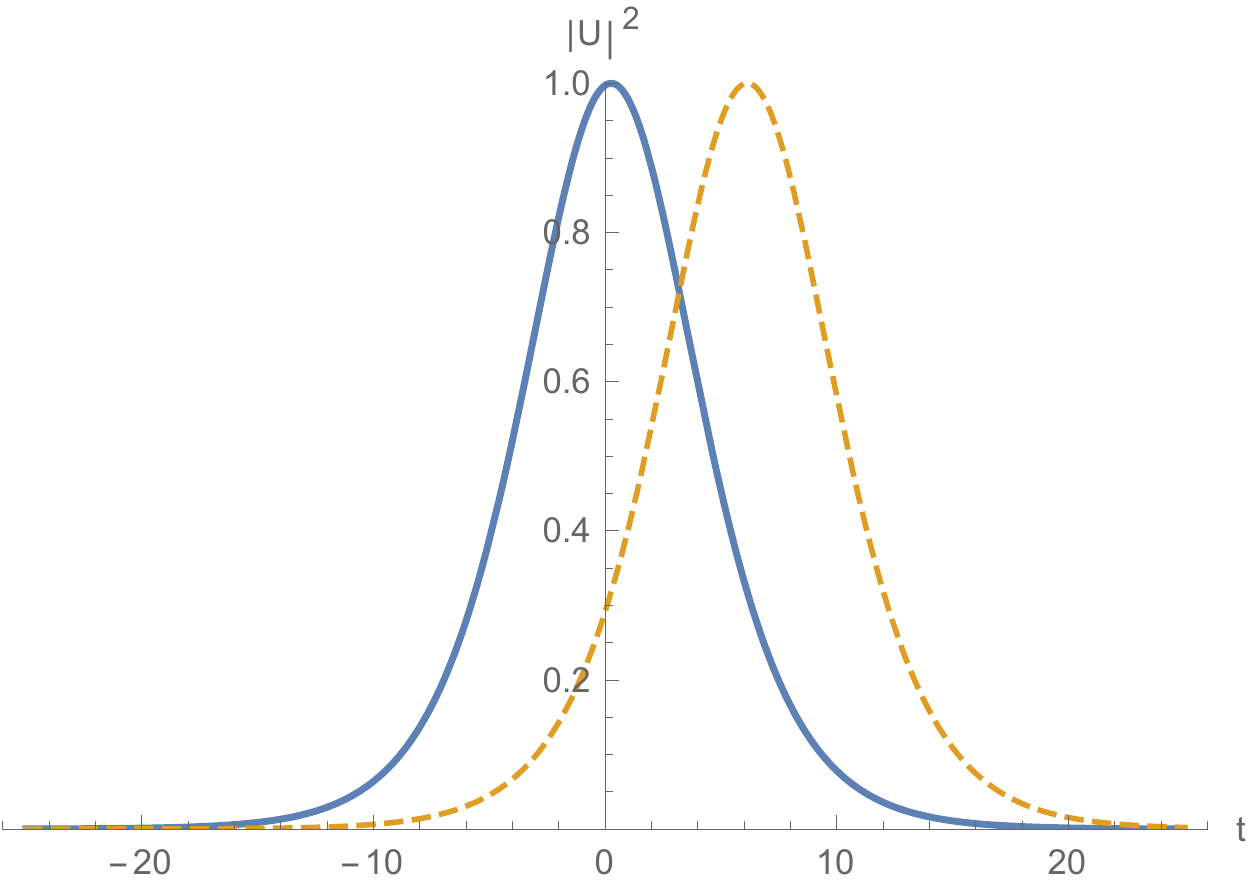}}}
\put(60,-23){\resizebox{!}{4cm}{\includegraphics{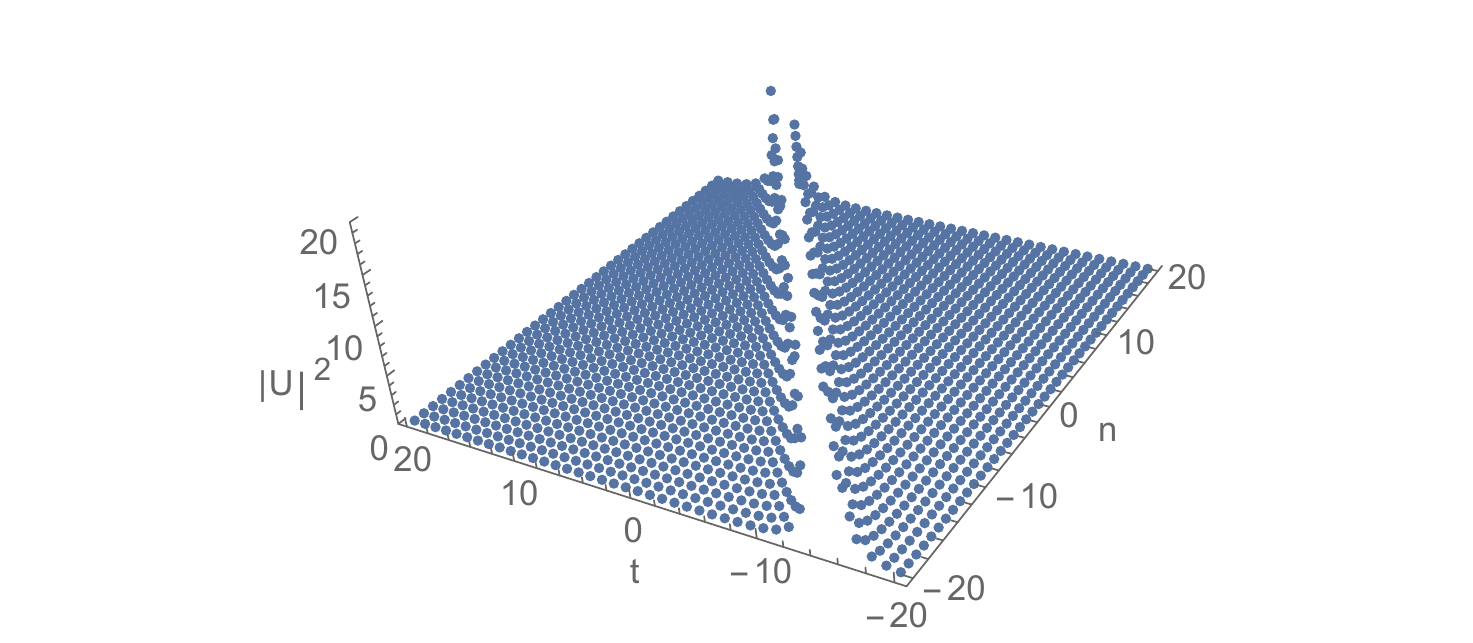}}}
\end{picture}
\end{center}
\vskip 20pt
\begin{center}
\begin{minipage}{15cm}{\footnotesize
\quad\qquad\qquad\qquad\qquad\qquad(c)\qquad\qquad\qquad\qquad\qquad\qquad\qquad\qquad\qquad\quad\qquad  \qquad \quad (d)\\
{\bf Fig. 3} shape and motion with $|\mathbb{U}|^2$ given by \eqref{u-d-ca-P} for $k_1=1+3i, p=2, \varrho_1^0=1$ and $c_1=1+i$.
(a) 3D-plot for $\delta=1$.
(b) waves in blue and yellow stand for plot (a) at $t=-1$ and $t=3$, respectively.
(c) waves in blue and yellow stand for plot (a) at $n=-1$ and $n=3$, respectively.
(d) 3D-plot for $\delta=-1$.}
\end{minipage}
\end{center}

For $\sigma=-1$, the wave package reads
\begin{align}
\label{u-d-ca-F}
|\mathbb{U}|^2=\dfrac{4\nu^{2}D}{B+B^{-1}-2\delta ED^{-1}\cos(2n\arctan\theta_1-4\nu t)},
\end{align}
where $B$ is the same as \eqref{B-def} and $E=\dfrac{((p-\mu^{2}-\nu^{2})^{2}+4p^{2}\nu^{2})^{n}}{((p+\mu)^{2}+\nu^{2})^{2n}}\exp(\frac{4\mu t}{\mu^{2}+\nu^{2}})$.
We illustrate this wave in Figure 4.

\begin{center}
\begin{picture}(120,100)
\put(-120,-23){\resizebox{!}{4.5cm}{\includegraphics{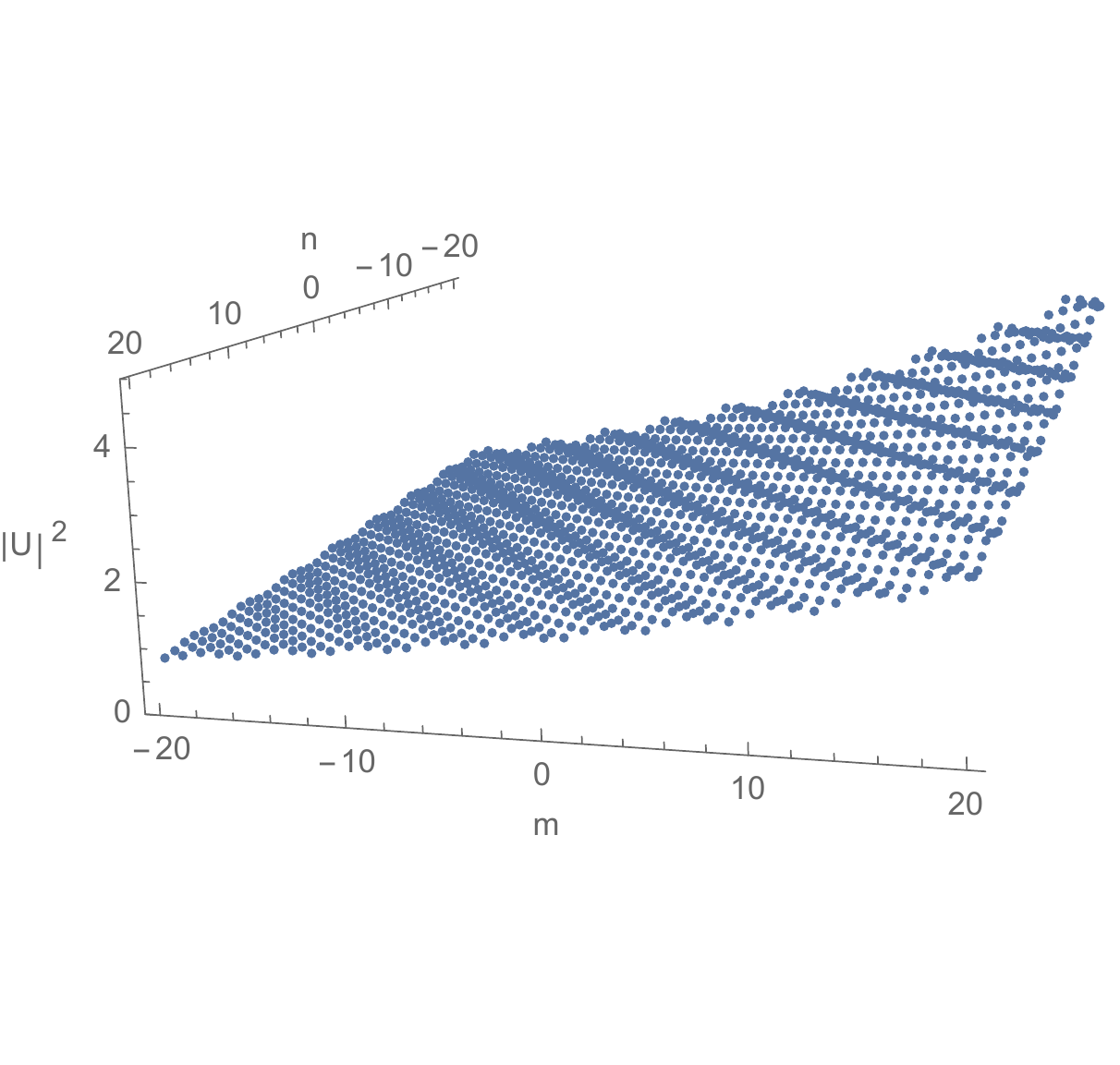}}}
\put(100,-23){\resizebox{!}{3.5cm}{\includegraphics{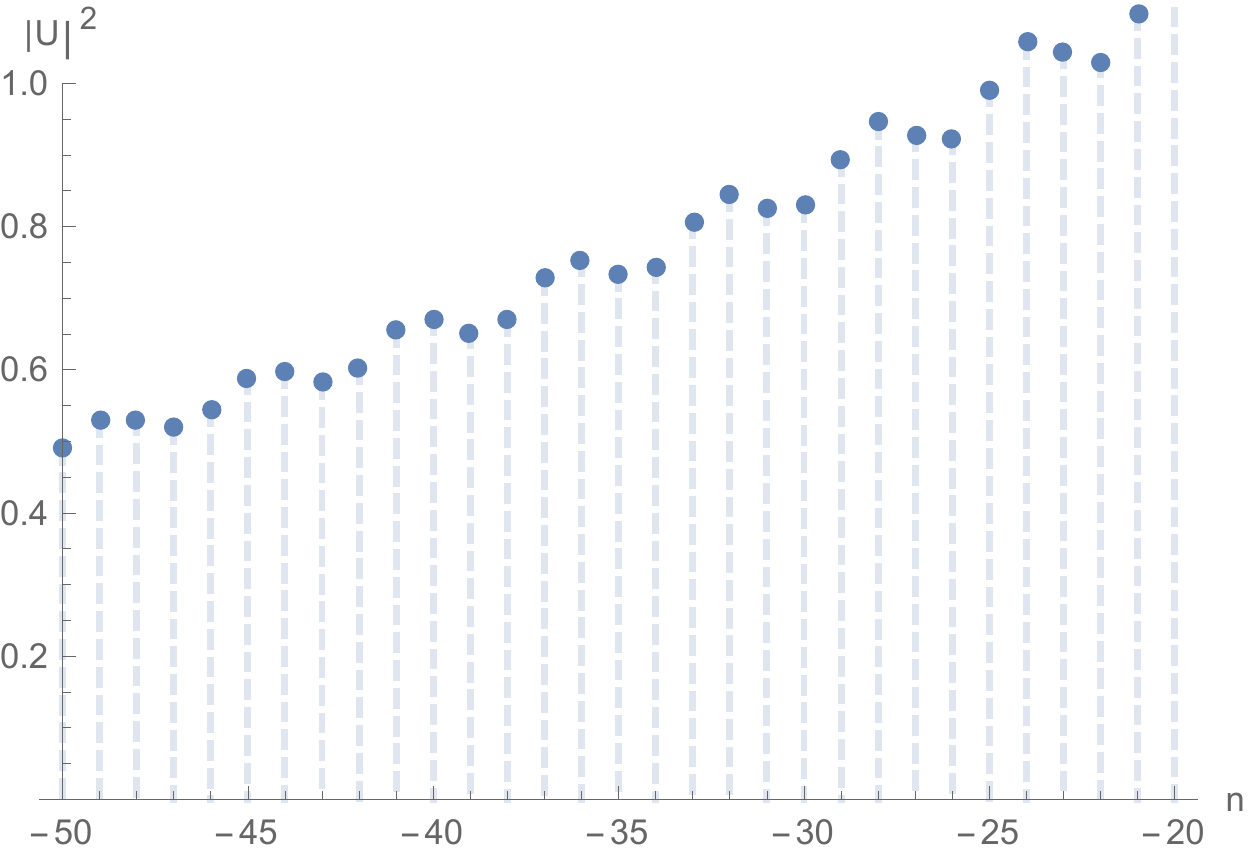}}}
\end{picture}
\end{center}
\vskip 20pt
\begin{center}
\begin{minipage}{15cm}{\footnotesize
\quad\qquad\qquad\qquad\qquad\qquad(a)\qquad\qquad\qquad\qquad\qquad\qquad\qquad\qquad\qquad\qquad\quad \qquad(b)}
\end{minipage}
\end{center}
\vskip 10pt
\begin{center}
\begin{picture}(120,80)
\put(-120,-23){\resizebox{!}{4cm}{\includegraphics{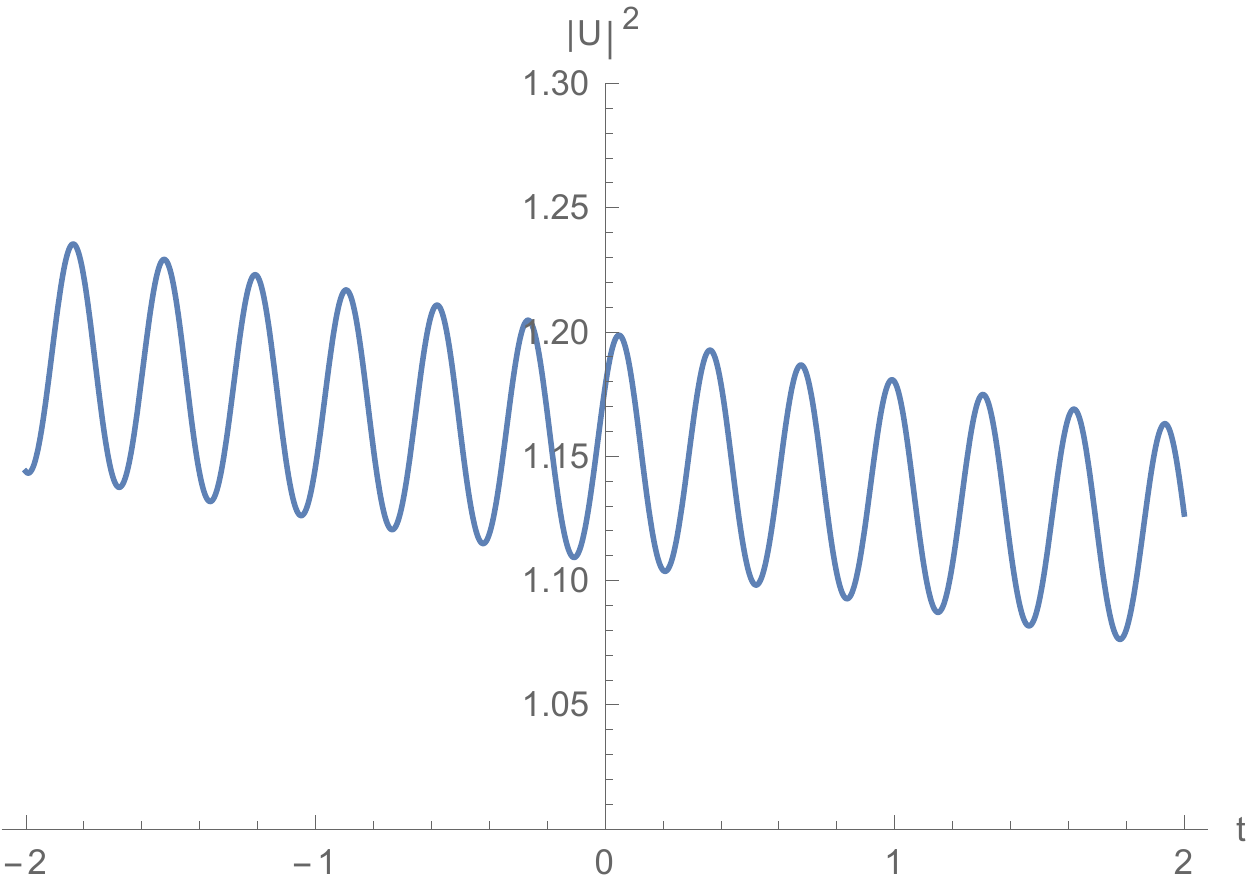}}}
\put(100,-23){\resizebox{!}{4cm}{\includegraphics{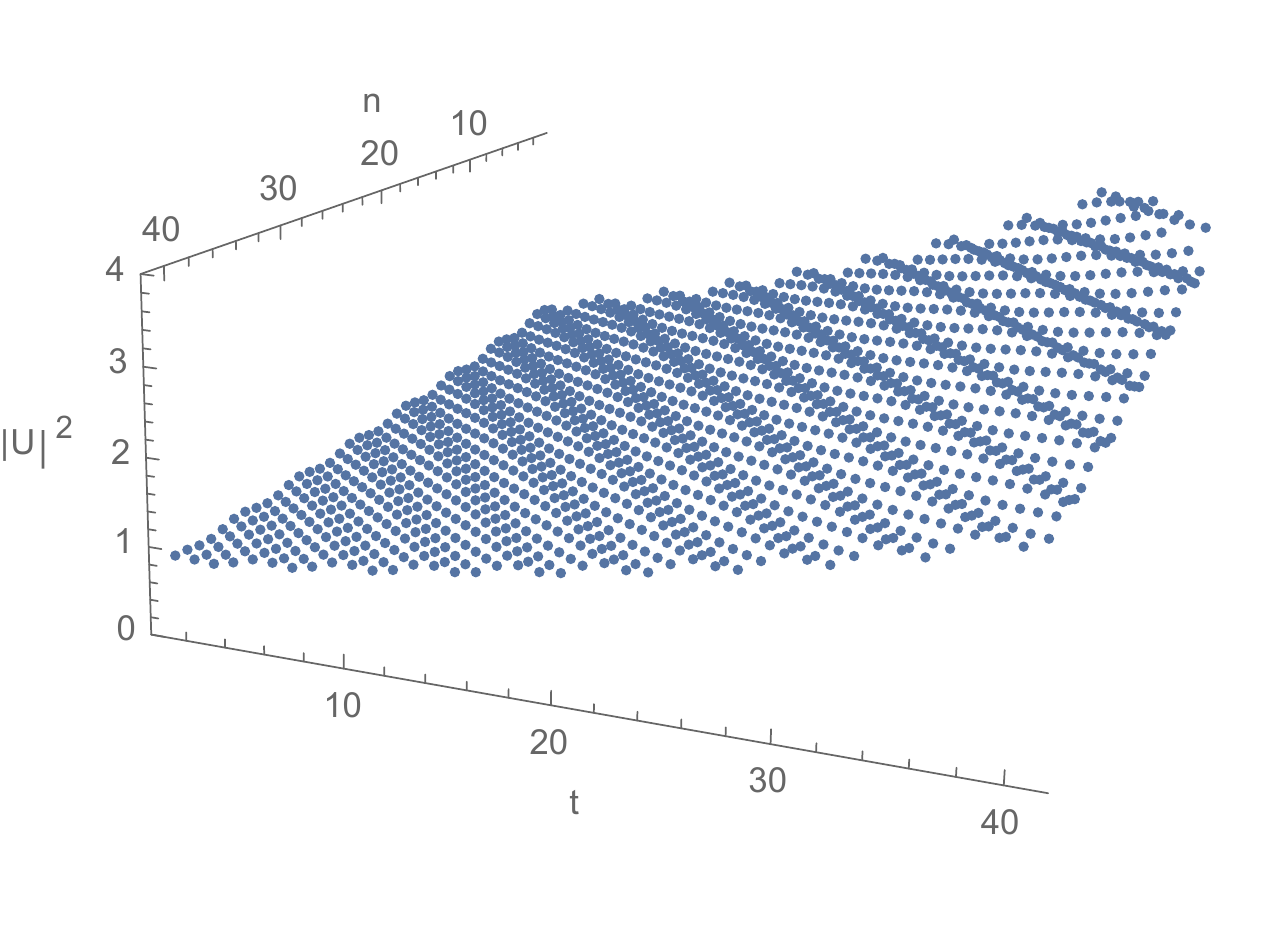}}}
\end{picture}
\end{center}
\vskip 20pt
\begin{center}
\begin{minipage}{15cm}{\footnotesize
\quad\qquad\qquad\qquad\qquad\qquad(c)\qquad\qquad\qquad\qquad\qquad\qquad\qquad\qquad\qquad\quad \qquad \quad (d)\\
{\bf Fig. 4} shape and motion with $|\mathbb{U}|^2$ given by \eqref{u-d-ca-F} for $k_1=-0.1+5i, p=2, \rho_1^0=1$ and $c_1=1+i$.
(a) 3D-plot for $\delta=1$.
(b) wave of plot (a) at $t=1$.
(c) wave of plot (a) at $n=-20$.
(d) 3D-plot for $\delta=-1$.}
\end{minipage}
\end{center}

\section{Conclusions} \label{Con}

In this paper, we have investigated local and nonlocal complex reduction for the dAKNS(-1)
equation \eqref{dnAKNS}. A local complex discrete sG equation (equation \eqref{cnd-sG} with $\sigma=1$)
and a nonlocal complex discrete sG equation (equation \eqref{cnd-sG} with $\sigma=-1$)
have been revealed. Based on the canonical DES \eqref{DES-C}, by imposing suitable constraints \eqref{cnd-sG-M1M12} on the elements $(\br_1,~\tc_1,~\bM_1)$
and $(\br_2,~\tc_2,~\bM_2)$ in the Cauchy matrix solution of the dAKNS(-1) equation \eqref{dnAKNS}, we have
obtained formal solution to the cnd-sG equation \eqref{cnd-sG}.
1-soliton solution, 2-soliton solutions and the simplest Jordan-block solution were given as three examples of the exact solutions.
Dynamics for 1-soliton solution were analyzed with graphical illustration. For the local complex discrete sG equation,
its 1-soliton solution exhibited the usual bell-type structure. For the nonlocal complex discrete sG equation,
its 1-soliton solution showed quasi-periodic phenomenon. Besides the dAKNS(-1) equation \eqref{dnAKNS}, we also discussed
local and nonlocal complex reduction for the sdAKNS(-1) equation \eqref{sdnAKNS}. Soliton solutions and Jordan-block
for the resulting local and nonlocal complex semi-discrete sine-Gordon equation are constructed.
Some discrete models of positive order AKNS-type equations, admitting Cauchy matrix solutions, have been proposed \cite{Zhao-ZNA,Zhao-JDEA}.
How to consider their local and nonlocal complex reduction and derive the solutions of the resulting
local and nonlocal complex equations are interesting questions worth consideration.

\vskip 20pt
\subsection*{Acknowledgments}
This project is supported by the National Natural Science Foundation of China (Nos. 12071432, 11401529)
and the Natural Science Foundation of Zhejiang Province (No. LY18A010033).

\vskip 20pt
\subsection*{Conflict of interest}
The authors declare that there is no conflict
of interests regarding the publication of this paper.

{\small
}
\end{document}